\setlist[enumerate,1]{label=(\roman*)}
\title{On Equilibrium Determinacy in Overlapping Generations Models with Money}
\author{Tomohiro Hirano\thanks{Department of Economics, Royal Holloway, University of London and Research Associate at the Center for Macroeconomics at the London School of Economics. Email: \href{mailto:tomohiro.hirano@rhul.ac.uk}{tomohih@gmail.com}.}
\and Alexis Akira Toda\thanks{Department of Economics, University of California San Diego. Email: \href{mailto:atoda@ucsd.edu}{atoda@ucsd.edu}.}}
\numberwithin{equation}{section}
\numberwithin{lem}{section}
\numberwithin{prop}{section}
\begin{document}

\maketitle

\begin{abstract}

This paper provides a detailed analysis of the local determinacy of monetary and non-monetary steady states in \citet{Tirole1985}'s classical two-period overlapping generations model with capital and production. We show that the sufficient condition for local determinacy in endowment economies provided by \citet{Scheinkman1980} does not generalize to models with production: there are robust examples with arbitrary utility functions in which the non-monetary steady state is locally determinate or indeterminate. In contrast, the monetary steady state is locally determinate under fairly weak conditions.

\medskip

\textbf{Keywords:} local determinacy, money, overlapping generations model, production.

\medskip

\textbf{JEL codes:} C62, D53, G12.
\end{abstract}

\section{Introduction}
It is well known that in overlapping generations (OLG) model with money, equilibrium indeterminacy is possible: there often exists a steady state in which money has value (monetary or bubbly steady state) as well as a continuum of equilibrium paths converging to the steady state in which money has no value (non-monetary or fundamental steady state). The possibility of equilibrium indeterminacy was first pointed out by \citet{Gale1973} in the endowment economy of \citet{Samuelson1958}. Somewhat surprisingly, formal analyses of the determinacy of steady states in these classical models are not easy to find.\footnote{\citet{GalorRyder1991} study the dynamic efficiency in \citet{Tirole1985}'s model without money, which is related to local determinacy of the non-monetary steady state. However, they only provide sufficient conditions using the constant elasticity of substitution (CES) production function. \citet{BoseRay1993} study the existence and efficiency of monetary equilibria in \citet{Tirole1985}'s model under high-level assumptions on the savings function. In contrast, we study local (in)determinacy in both non-monetary and monetary economies.} Many authors use simple phase diagrams or high-level assumptions to study the local determinacy of equilibria. For instance, \citet[p.~1502]{Tirole1985} directly assumes some monotonicity condition on a function that determines the rental rate without providing sufficient conditions on fundamentals. Similarly, \citet[p.~268, Endnote 16]{BlanchardFischer1989} state ``Care must be taken in using a phase diagram to analyze the dynamics of a \emph{difference} equation system. [\ldots] Thus we must check in this case whether the system is indeed saddle point stable [\ldots]. This check is left to the reader''.

In a by now forgotten contribution, \citet{Scheinkman1980} shows that if the old have no endowment and the period utility function satisfies $\lim_{x\to 0}xu'(x)>0$, then in endowment economies the non-monetary steady state is locally determinate. \citet{Santos1990} significantly generalizes this result and proves the existence of equilibria in which the value of money is bounded away from zero.

This paper revisits these classical models with money and provides a detailed analysis of equilibrium determinacy in \citet{Tirole1985}'s model. We obtain two main results, one negative and one positive. The first and negative result roughly states that, given an arbitrary utility function, by judiciously choosing the production function, we can provide robust examples of equilibrium determinacy or indeterminacy at the non-monetary steady state. This result shows that \citet{Scheinkman1980}'s sufficient condition does not generalize to production economies, and in fact, no conditions on utility function alone are sufficient for determinacy or indeterminacy. The second and positive result states that the monetary steady state (if one exists) is locally determinate under fairly weak conditions.

To illustrate our theoretical results, we provide a complete characterization for the model with Cobb-Douglas utility function and constant elasticity of substitution (CES) production function. Even in this canonical setting, we show that anything goes for the eigenvalues of the Jacobian at the non-monetary steady state by choosing parameters appropriately. However, we show that whenever a monetary steady state exists, it is always locally determinate, and there exists one locally indeterminate non-monetary steady state and potentially another that is unstable. Our paper highlights the importance of complementing intuition with formal mathematical analysis and shows that monetary economies are inherently unstable with a continuum of equilibria.

\section{Determinacy in OLG production economies}\label{sec:production}

\subsection{Model}

The model description is brief because it is well known. We consider \citet{Tirole1985}'s OLG model with capital accumulation and a constant population normalized to 1. Let $U(y,z)=u(y)+\beta v(z)$ be the utility when consumption is $(y,z)$ for young and old. We assume $u$ is twice continuously differentiable, $u'>0$, $u''<0$, satisfies the Inada condition $u'(0)=\infty$, and likewise for $v$. Let $F(K,L)$ be the aggregate neoclassical production function, where $K,L$ are capital and labor inputs. The young supply labor inelastically and invest in capital and an intrinsically useless asset in unit supply, which is initially held by the old.\footnote{\citet{ShiSuen2014} consider elastic labor supply.} The old liquidate all resources and consume.

Let $P_t$ be the asset price, $R_t$ the rental rate, $w_t$ the wage rate, $x_t$ the asset holdings of the young, $k_t$ the capital holdings of the old, $(y_t,z_t)$ the consumption of the young and old, and $(K_t,L_t)$ the capital and labor inputs. A competitive equilibrium consists of a sequence of these variables such that
\begin{enumerate*}
    \item each generation maximizes utility subject to the budget constraints $y_t+k_{t+1}+P_tx_t=w_t$ for the young and $z_{t+1}=R_{t+1}k_{t+1}+P_{t+1}x_t$ for the old,
    \item the firm maximizes profit $F(K_t,L_t)-R_tK_t-w_tL_t$,
    \item the commodity market clears, so $y_t+z_t+k_{t+1}=F(K_t,L_t)$,
    \item factor markets clear, so $K_t=k_t$, $L_t=1$,
    \item the asset market clears, so $x_t=1$.
\end{enumerate*}

It is easy to show that either $P_t=0$ for all $t$ or $P_t>0$ for all $t$ \citep[\S2]{HiranoToda2024JME}. We say that an equilibrium is
\begin{enumerate*}
    \item \emph{stationary} or a \emph{steady state} if $P_t=P$ is constant,
    \item  \emph{non-monetary} (or \emph{fundamental}) if $P_t=0$ for all $t$,
    \item \emph{monetary} (or \emph{bubbly}) if $P_t>0$ for all $t$.
\end{enumerate*}
In what follows, we abbreviate monetary and non-monetary steady states as MSS and NMSS, respectively.

\subsection{Equilibrium analysis}

Define $f(k)=F(k,1)$. Profit maximization implies
\begin{align*}
    R_t&=F_K(k_t,1)=f'(k_t),\\
    w_t&=F_L(k_t,1)=f(k_t)-k_tf'(k_t).
\end{align*}

It is convenient to define savings by $s_t\coloneqq k_{t+1}+P_tx_t=w_t-y_t$. In a non-monetary equilibrium, because $P_t=0$, we have $s_t=k_{t+1}$. In a monetary equilibrium, because capital and the asset are perfect substitutes, agents must be indifferent between holding the two. In either case, the young's problem reduces to maximizing
\begin{equation*}
    u(w_t-s_t)+\beta v(R_{t+1}s_t).
\end{equation*}
The strict concavity of $u,v$ and the Inada condition imply that there exists a unique solution $s_t=s(w_t,R_{t+1})$, where $s(w,R)$ solves the first-order condition
\begin{equation}
    -u'(w-s)+\beta Rv'(Rs)=0. \label{eq:foc}
\end{equation}

In equilibrium, because $x_t=1$, we obtain the equilibrium condition
\begin{equation}
    k_{t+1}+P_t=s(f(k_t)-k_tf'(k_t),f'(k_{t+1})). \label{eq:eqcond1}
\end{equation}
Furthermore, individual optimality implies the no-arbitrage condition
\begin{equation}
    P_t=\frac{1}{f'(k_{t+1})}P_{t+1}. \label{eq:noarbitrage}
\end{equation}
Therefore the equilibrium is characterized by the system of nonlinear difference equations \eqref{eq:eqcond1} and \eqref{eq:noarbitrage}. To study the equilibrium dynamics, define the state variable $\xi_t=(\xi_{1t},\xi_{2t})=(k_t,P_t)$. Then we may write \eqref{eq:eqcond1}, \eqref{eq:noarbitrage} as $\Phi(\xi_t,\xi_{t+1})=0$, where
\begin{align*}
    \Phi_1(\xi,\eta)&=\eta_1+\xi_2-s(f(\xi_1)-\xi_1f'(\xi_1),f'(\eta_1)),\\
    \Phi_2(\xi,\eta)&=\eta_2-\xi_2f'(\eta_1).
\end{align*}
The following lemma is a simple application of the implicit function theorem. Below, we denote the Jacobian of a function $g$ with respect to a variable $x$ by $D_xg$.

\begin{lem}\label{lem:DPhi}
Let $\xi^*=(k,P)$ be a steady state, so $\Phi(\xi^*,\xi^*)=0$. If $D_\eta\Phi(\xi^*,\xi^*)$ is nonsingular, the equation $\Phi(\xi,\eta)=0$ can be locally solved as $\eta=\phi(\xi)$, where
\begin{equation}
    D_\xi\phi(\xi^*)=\frac{1}{1-s_Rf''}\begin{bmatrix}
        -s_wkf'' & -1\\
        -s_wkP(f'')^2 & (1-s_Rf'')f'-Pf''
    \end{bmatrix} \label{eq:Dphi_ss}
\end{equation}
and $f',f''$ are evaluated at $k$.
\end{lem}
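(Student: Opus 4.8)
The plan is to apply the implicit function theorem directly to $\Phi(\xi,\eta)=0$ at the point $(\xi^*,\xi^*)$. Since $\Phi$ is continuously differentiable (because $s$, $f$, and $f'$ are) and $D_\eta\Phi(\xi^*,\xi^*)$ is nonsingular by hypothesis, the theorem provides a $C^1$ map $\phi$ with $\eta=\phi(\xi)$ for $\xi$ near $\xi^*$, together with the standard derivative formula
\[
D_\xi\phi(\xi^*)=-\bigl[D_\eta\Phi(\xi^*,\xi^*)\bigr]^{-1}D_\xi\Phi(\xi^*,\xi^*).
\]
So the work reduces to computing the two $2\times2$ Jacobian blocks, inverting one, and multiplying.

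First I would differentiate. Writing $w(\xi_1)=f(\xi_1)-\xi_1f'(\xi_1)$, the chain rule gives $w'(\xi_1)=-\xi_1f''(\xi_1)$ since the $f'$ terms cancel; at $\xi_1=k$ this equals $-kf''$. Differentiating $\Phi_1=\eta_1+\xi_2-s(w(\xi_1),f'(\eta_1))$ and $\Phi_2=\eta_2-\xi_2f'(\eta_1)$ and evaluating at $(\xi^*,\xi^*)=((k,P),(k,P))$ then yields
\[
D_\xi\Phi=\begin{bmatrix} s_wkf'' & 1\\ 0 & -f'\end{bmatrix},\qquad D_\eta\Phi=\begin{bmatrix} 1-s_Rf'' & 0\\ -Pf'' & 1\end{bmatrix},
\]
where $s_w,s_R$ denote the partials of $s$ at the steady-state arguments $(w,R)=(f(k)-kf'(k),f'(k))$ and $f',f''$ are evaluated at $k$.

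The key simplification is that $D_\eta\Phi$ is lower triangular, so it is nonsingular precisely when $1-s_Rf''\neq0$, with
\[
\bigl[D_\eta\Phi\bigr]^{-1}=\frac{1}{1-s_Rf''}\begin{bmatrix}1 & 0\\ Pf'' & 1-s_Rf''\end{bmatrix}.
\]
Forming $-\bigl[D_\eta\Phi\bigr]^{-1}D_\xi\Phi$ and collecting terms then gives exactly \eqref{eq:Dphi_ss}.

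The only place demanding care (rather than a genuine obstacle) is sign bookkeeping: the cancellation in $w'(\xi_1)$, the leading minus sign from the implicit function formula, and the off-diagonal entry $-Pf''$ of $D_\eta\Phi$ coming from $\partial_{\eta_1}\bigl(-\xi_2f'(\eta_1)\bigr)$. I would verify the product entrywise; for instance the $(2,2)$ entry is $-\bigl(Pf''\cdot1+(1-s_Rf'')(-f')\bigr)=(1-s_Rf'')f'-Pf''$, in agreement with the stated matrix.
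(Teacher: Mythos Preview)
Your proposal is correct and follows essentially the same approach as the paper: compute $D_\xi\Phi$ and $D_\eta\Phi$, apply the implicit function theorem to get $D_\xi\phi=-[D_\eta\Phi]^{-1}D_\xi\Phi$, and carry out the matrix multiplication. Your Jacobian blocks, the lower-triangular inverse, and the entrywise check all match the paper's computation.
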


At NMSS, setting $P=0$ in \eqref{eq:Dphi_ss} yields
\begin{equation}
    D_\xi\phi=\begin{bmatrix}
        -\frac{s_wkf''}{1-s_Rf''} & -\frac{1}{1-s_Rf''}\\
        0 & f'
    \end{bmatrix}. \label{eq:Dphi_f}
\end{equation}
Studying local determinacy reduces to determining whether the diagonal entries of $D_\xi\phi$ in \eqref{eq:Dphi_f} are inside or outside the unit circle. Without further structure, it is difficult to provide general sufficient conditions. We thus approach the problem in reverse, and ask whether we can reverse-engineer a model that generates a desired steady state. The following lemma provides an answer.

\begin{lem}
\label{lem:reverse}
Take any utility functions of the young and old $u,v$, capital $k^*>0$, rental rate $R>0$, wage $w>k^*$, and parameter $c\le 0$. Then there exist discount factor $\beta>0$ and production function $f$ with $f''(k^*)=c$ such that $(k^*,0)$ is NMSS consistent with the given parameters.
\end{lem}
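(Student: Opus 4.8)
The plan is to read the steady-state requirements off the equilibrium conditions and then reverse-engineer $\beta$ and $f$ to meet them. Setting $P_t=0$ and $k_t=k_{t+1}=k^*$ in \eqref{eq:eqcond1}, and recalling that the rental rate and wage at capital $k^*$ are $f'(k^*)$ and $f(k^*)-k^*f'(k^*)$, the point $(k^*,0)$ — which automatically satisfies \eqref{eq:noarbitrage} because $P=0$ — is an NMSS realizing the prescribed rental rate $R$ and wage $w$ if and only if \emph{(i)} $f'(k^*)=R$, \emph{(ii)} $f(k^*)=w+Rk^*$, and \emph{(iii)} $s(w,R)=k^*$, where $s$ is the savings function defined by \eqref{eq:foc}. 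So I must choose $\beta$ so that \emph{(iii)} holds and $f$ so that \emph{(i)}, \emph{(ii)}, and the normalization $f''(k^*)=c$ hold.

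First I would pin down $\beta$. Since $w>k^*>0$, the arguments $w-k^*$ and $Rk^*$ are strictly positive, so $u'(w-k^*)$ and $v'(Rk^*)$ are finite and positive (the Inada conditions bind only at $0$). Set
\begin{equation*}
\beta\coloneqq\frac{u'(w-k^*)}{Rv'(Rk^*)}>0 .
\end{equation*}
Plugging $s=k^*$ into \eqref{eq:foc} with this $\beta$ gives $-u'(w-k^*)+\beta Rv'(Rk^*)=0$, so $s=k^*$ solves the first-order condition at $(w,R)$; by the uniqueness of this solution already established, $s(w,R)=k^*$, which is \emph{(iii)}.

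Next I would build $f$: a twice continuously differentiable, increasing, (weakly) concave function on $(0,\infty)$ with whatever boundary behavior the neoclassical assumption demands, whose $2$-jet at $k^*$ is the prescribed triple $f(k^*)=w+Rk^*>0$, $f'(k^*)=R>0$, $f''(k^*)=c\le 0$. Since these targets lie in the admissible ranges there is no obstruction. One concrete recipe: let $f$ coincide near $k^*$ with the concave quadratic $k\mapsto(w+Rk^*)+R(k-k^*)+\frac{c}{2}(k-k^*)^2$ on a neighborhood small enough that this quadratic is still increasing, and then extend it monotonically and concavely to all of $(0,\infty)$, splicing in a power function $\propto k^{\alpha}$ with $\alpha\in(0,1)$ near the origin and a profile with slope vanishing at infinity for large $k$, smoothing the joints with a bump function, so that the Inada conditions $f(0)=0$, $f'(0^+)=\infty$, $f'(\infty)=0$ hold. (For many parameter configurations a CES production function with appropriate parameters already works.)

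It then remains only to confirm that $\xi^*=(k^*,0)$ is a steady state, i.e. $\Phi(\xi^*,\xi^*)=0$: indeed $\Phi_2(\xi^*,\xi^*)=0-0\cdot f'(k^*)=0$, and $\Phi_1(\xi^*,\xi^*)=k^*+0-s\bigl(f(k^*)-k^*f'(k^*),\,f'(k^*)\bigr)=k^*-s(w,R)=0$ by \emph{(i)}--\emph{(iii)}; moreover $f'(k^*)=R$, $f(k^*)-k^*f'(k^*)=w$, and $f''(k^*)=c$ say precisely that this steady state is consistent with all the prescribed data. The only step requiring real work is the construction of $f$, and there the difficulty is merely bookkeeping — reconciling the prescribed local second-order data with the global regularity of a neoclassical production function — rather than conceptual; the hypothesis $c\le 0$ is exactly what guarantees that a concave interpolant is available.
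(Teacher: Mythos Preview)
Your proposal is correct and follows essentially the same route as the paper: first choose $\beta=u'(w-k^*)/(Rv'(Rk^*))$ so that the savings first-order condition pins $s(w,R)=k^*$, then build $f$ as the quadratic $w+Rk+\tfrac{c}{2}(k-k^*)^2$ near $k^*$ and extend globally. The only difference is cosmetic: the paper is content to linearly extrapolate $f$ outside a small interval (so its $f$ is merely nonnegative, increasing, concave), whereas you go further and splice in Inada-compatible pieces near $0$ and $\infty$; since the lemma only needs local data at $k^*$, the paper's cruder extension already suffices.
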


We can show that ``anything goes'' regarding the local determinacy of NMSS, which is our first main result.

\begin{thm}
\label{thm:locdet_f}
Let everything be as in Lemma \ref{lem:reverse} and $c\le 0$ sufficiently close to 0. Then the following statements are true.
\begin{enumerate}
    \item If $R<1$, then NMSS is locally indeterminate: for any $(k_0,P_0)$ sufficiently close to $(k^*,0)$, there exists a monetary equilibrium converging to NMSS.
    \item If $R>1$, then NMSS is locally determinate: for any $k_0$ sufficiently close to $k^*$, there exists a unique equilibrium converging to NMSS, which is non-monetary.
\end{enumerate}
\end{thm}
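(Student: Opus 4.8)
The plan is to read the two eigenvalues of $D_\xi\phi$ at NMSS off \eqref{eq:Dphi_f}, show that as $c\to 0$ one of them tends to $0$ while the other equals $R$, and then convert this spectral picture into the stated (in)determinacy dichotomy using two structural facts about the dynamics: the line $\{P=0\}$ is forward invariant, and along any equilibrium the no-arbitrage equation \eqref{eq:noarbitrage} forces $P_t$ to keep its sign.

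First I would locate the eigenvalues. The matrix in \eqref{eq:Dphi_f} is upper triangular, so its eigenvalues are the diagonal entries
\[
\lambda_1=-\frac{s_wk^*f''(k^*)}{1-s_Rf''(k^*)}=-\frac{s_wk^*c}{1-s_Rc},
\qquad
\lambda_2=f'(k^*)=R,
\]
and when $\lambda_1\neq\lambda_2$ the $\lambda_1$-eigenspace is the $k$-axis $\{P=0\}$. The key point is that in the construction of Lemma \ref{lem:reverse} the discount factor $\beta$ is pinned down by the steady-state Euler equation, $\beta=u'(w-k^*)/(Rv'(Rk^*))$, independently of $c$; differentiating \eqref{eq:foc} then shows that $s_w,s_R$, evaluated at $(w,R)$, are finite constants that do not depend on $c$ (indeed $s_w\in(0,1)$). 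Hence $\lambda_1\to 0$ as $c\to 0$, so $|\lambda_1|<1$ once $c$ is close enough to $0$; moreover $1-s_Rc\neq 0$ there, so $\phi$ is well defined by Lemma \ref{lem:DPhi}. Thus for such $c$ the steady state is hyperbolic, $\lambda_1$ is the stable root, and $\lambda_2=R$ lies inside the unit circle iff $R<1$.

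Next I would record the two structural facts. (a) If $\xi_2=0$ then $\Phi_2(\xi,\eta)=\eta_2=0$, so $\{P=0\}$ is forward invariant under $\phi$. (b) Along any equilibrium, \eqref{eq:noarbitrage} gives $P_{t+1}=f'(k_{t+1})P_t$ with $f'>0$, so $P_t$ never changes sign; in particular $P_0>0$ implies $P_t>0$ for all $t$. For part (i) with $R<1$, both eigenvalues lie in the open unit disk, so NMSS is a locally asymptotically stable fixed point of $\xi\mapsto\phi(\xi)$: there is a neighborhood from which every forward orbit converges to $(k^*,0)$, and near NMSS the positivity conditions $y_t,k_{t+1}>0$ hold by continuity (using $w>k^*$), so these orbits are genuine equilibria. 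By (b) any such orbit with $P_0>0$ is monetary, and letting $P_0$ range over small positive numbers yields a continuum of monetary equilibria converging to NMSS — local indeterminacy. For part (ii) with $R>1$, we have $|\lambda_1|<1<\lambda_2$, so NMSS is a hyperbolic saddle. Existence of a converging non-monetary equilibrium follows from (a): on $\{P=0\}$ the scalar map $k\mapsto\phi_1(k,0)$ satisfies $|\phi_1'(k^*)|=|\lambda_1|<1$, hence is a contraction near $k^*$, so for $k_0$ near $k^*$ the non-monetary path converges to NMSS. For uniqueness, suppose an equilibrium $(k_t,P_t)$ converges to $(k^*,0)$ with $P_0\neq 0$; then by (b) $P_t$ keeps its sign, and since $k_{t+1}\to k^*$ with $f'(k^*)=R>1$ we get $f'(k_{t+1})\ge (1+R)/2>1$ for all large $t$, so $|P_t|\ge ((1+R)/2)^{t-T}|P_T|\to\infty$, contradicting $P_t\to 0$. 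Hence $P_0=0$, the equilibrium is non-monetary, and since \eqref{eq:eqcond1} with $P=0$ is locally uniquely solvable for $k_{t+1}$ (again $\phi_1$) it agrees with the path just constructed — local determinacy.

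I expect the main obstacle to be the passage from the spectral statement to claims about genuine equilibrium paths: in (i), checking that the converging monetary orbits are bona fide equilibria and that every small $P_0>0$ works; and, more substantively, in (ii) ruling out a monetary equilibrium that converges to NMSS. The sign-preservation in fact (b) combined with $R>1$ disposes of the latter cheaply; without \eqref{eq:noarbitrage} one would instead invoke the stable manifold theorem and argue that the $1$-dimensional local stable manifold, being tangent to the $k$-axis, must coincide with the invariant line $\{P=0\}$. A secondary issue is uniformity as $c\to 0$: one must ensure $s_w,s_R$ (hence $\lambda_1$) stay bounded, which is exactly why it matters that $\beta$ in Lemma \ref{lem:reverse} is fixed independently of $c$.
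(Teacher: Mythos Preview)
Your proposal is correct and follows essentially the same route as the paper: read off the eigenvalues of the triangular Jacobian \eqref{eq:Dphi_f}, note that $\lambda_1\to 0$ while $\lambda_2=R$ as $c\to 0$, and then invoke asymptotic stability ($R<1$) or the saddle structure ($R>1$). The differences are in presentation and in one sub-argument. You work with general small $c$ and explicitly check that $\beta$, $s_w$, $s_R$ are fixed independently of $c$ so that $\lambda_1\to 0$ is legitimate; the paper simply sets $c=0$ and asserts the perturbed case is similar. More substantively, for uniqueness in (ii) you give a self-contained argument: invariance of $\{P=0\}$ plus the contraction on that line gives existence, and $P_0\neq 0$ together with $f'(k_{t+1})\to R>1$ in \eqref{eq:noarbitrage} forces $|P_t|\to\infty$, contradicting convergence. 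The paper instead observes that on a converging path $R^{-t}P_t\to 0$ and then appeals to an external transversality result \citep{HiranoToda2024JME} to conclude $P_t$ equals its fundamental value $0$. Your argument is more elementary and avoids the outside reference; the paper's has the advantage of connecting to the standard asset-pricing interpretation. Both are valid.
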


Theorem \ref{thm:locdet_f} implies that no condition on the utility functions $u,v$ \emph{alone} are sufficient for the local determinacy of NMSS in a production economy. This is very different from endowment economies discussed in Appendix \ref{sec:endowment}, for which there exists a condition on $v$ alone that is sufficient for local determinacy.

Our second main result shows that there exists a condition on $v$ alone that guarantees the local determinacy of MSS.

\begin{thm}
\label{thm:locdet_b}
Let $(k,P)\gg 0$ be MSS. If $f''<0$ and $1-s_Rf''>0$ at the steady state, then $(k,P)$ is locally determinate. In particular, a sufficient condition for local determinacy is $\gamma_v\le 1$, where $\gamma_v(z)\coloneqq -zv''(z)/v'(z)>0$ denotes the relative risk aversion of $v$.
\end{thm}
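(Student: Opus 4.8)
The strategy is the standard saddle-path argument: I would show that at MSS $\xi^*=(k,P)$ the Jacobian $J\coloneqq D_\xi\phi(\xi^*)$ from Lemma~\ref{lem:DPhi} is hyperbolic with exactly one eigenvalue inside the unit circle, and that the corresponding eigenline is transversal to the $P$-axis. Since $k_t$ is predetermined and $P_t$ is a free jump variable, this is precisely local determinacy: by the stable manifold theorem the local stable manifold of $\xi^*$ is the graph of a $C^1$ function $P=h(k)$ with $h(k^*)=P>0$, so for each $k_0$ near $k$ the unique trajectory converging to $\xi^*$ is the one with $P_0=h(k_0)$, and since $h>0$ near $k^*$ this trajectory has $P_t>0$ for all $t$, i.e., it is the unique converging equilibrium, which is monetary.

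Two preliminary observations streamline the computation. First, dividing the no-arbitrage condition~\eqref{eq:noarbitrage} by $P>0$ at a steady state yields $f'(k)=1$, which collapses~\eqref{eq:Dphi_ss} considerably. Second, implicit differentiation of the first-order condition~\eqref{eq:foc} gives $s_w=u''/(u''+\beta R^2v'')$ with numerator and denominator both negative, so $s_w>0$; and the hypothesis $1-s_Rf''\neq 0$ makes $D_\eta\Phi(\xi^*,\xi^*)$ nonsingular, so Lemma~\ref{lem:DPhi} applies and~\eqref{eq:Dphi_ss} is valid.

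Plugging $f'(k)=1$ into~\eqref{eq:Dphi_ss} (the $(f'')^2$ terms in the determinant cancel), I would get
\[
    T\coloneqq\operatorname{tr}J=1-\frac{(s_wk+P)f''}{1-s_Rf''},\qquad
    D\coloneqq\det J=-\frac{s_wkf''}{1-s_Rf''}.
\]
Under $f''<0$, $1-s_Rf''>0$ and $s_w,k,P>0$, the sign pattern is $D>0$, $T>1>0$, and $T-(1+D)=-Pf''/(1-s_Rf'')>0$. Hence the characteristic polynomial $p(\lambda)=\lambda^2-T\lambda+D$ obeys $p(1)=1-T+D<0$ and $p(-1)=1+T+D>0$. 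A real quadratic with positive leading coefficient and $p(1)<0$ has two real roots, one below $1$ and one above $1$; together with $p(-1)>0$ the lower root then lies in $(-1,1)$. So $J$ has a single eigenvalue $\lambda_s$ with $|\lambda_s|<1$ and none on the unit circle. Transversality is immediate: the $(1,2)$-entry of $J$ is $-1/(1-s_Rf'')\neq 0$, so a $\lambda_s$-eigenvector cannot have vanishing first component, hence the stable eigenline is not the $P$-axis.

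Finally, for the ``in particular'' clause I would differentiate~\eqref{eq:foc} with respect to $R$ to obtain $s_R=-\beta v'(Rs)(1-\gamma_v(Rs))/(u''+\beta R^2v'')$; since the denominator is negative, $\gamma_v\le 1$ gives $s_R\ge 0$, whence $s_Rf''\le 0$ and $1-s_Rf''\ge 1>0$, so the second hypothesis holds (the first, $f''<0$, being assumed). The main obstacle is not conceptual but a matter of care: using $f'(k)=1$ to reduce~\eqref{eq:Dphi_ss}, tracking signs to reach the inequality $T>1+D$, and — the one genuinely delicate step — making sure that ``one stable eigenvalue plus transversality'' yields existence \emph{and} uniqueness of the converging equilibrium for every predetermined $k_0$, which is exactly the content of the graph representation of the stable manifold.
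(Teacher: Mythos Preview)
Your argument is correct and follows essentially the same route as the paper: compute $\operatorname{tr}J$ and $\det J$ at MSS using $f'(k)=1$, verify $p(1)<0$ (the paper pairs this with $p(0)=D>0$ rather than your $p(-1)>0$, obtaining the marginally sharper $0<\lambda_1<1<\lambda_2$), and conclude saddle-point determinacy; the derivation of $\gamma_v\le 1\Rightarrow s_R\ge 0\Rightarrow 1-s_Rf''>0$ is likewise identical. Your added transversality check and explicit stable-manifold/graph discussion go beyond the paper, which simply asserts local determinacy once the saddle structure is established.
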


\section{Cobb-Douglas-CES economy}

To illustrate Theorems \ref{thm:locdet_f} and \ref{thm:locdet_b}, we present a canonical example. Let the utility function be Cobb-Douglas, so
\begin{equation*}
    U(y,z)=u(y)+\beta v(z)=(1-\beta)\log y+\beta \log z
\end{equation*}
with $\beta\in (0,1)$. Let the production function exhibit constant elasticity of substitution (CES), so
\begin{equation*}
    F(K,L)=\begin{cases*}
        A\left(\alpha K^{1-\rho}+(1-\alpha)L^{1-\rho}\right)^\frac{1}{1-\rho}+(1-\delta)K & if $0<\rho\neq 1$,\\
        AK^\alpha L^{1-\alpha}+(1-\delta)K & if $\rho=1$,
    \end{cases*}
\end{equation*}
where $A>0$ is productivity, $\alpha\in (0,1)$, $\delta\in (0,1]$ is the capital depreciation rate, and $1/\rho$ is the capital-labor substitution elasticity. Below, for brevity we refer to this model as $\mathcal{M}(\theta)$, where $\theta=(\beta,A,\alpha,\rho,\delta)$ lists model parameters.

With Cobb-Douglas utility, the savings function is $s(w,R)=\beta w$, implying $s_w=\beta$ and $s_R=0$. Therefore the eigenvalues of the Jacobian at NMSS \eqref{eq:Dphi_f} are $\lambda_1\coloneqq -\beta kf''(k)>0$ and $\lambda_2\coloneqq f'(k)>1-\delta$. The following proposition shows that we can reverse-engineer $\mathcal{M}(\theta)$ that achieves any admissible $\lambda_1,\lambda_2$.

\begin{prop}\label{prop:lambda}
Let $\beta\in (0,1)$, $\delta\in (0,1]$, $k>0$, $\lambda_1>0$, and $\lambda_2>1-\delta$ be given. Then there exist unique $A>0$, $\alpha\in (0,1)$, and $\rho>0$ such that the non-monetary steady state capital of $\mathcal{M}(\theta)$ is $k$ and the eigenvalues of the Jacobian are $\lambda_1,\lambda_2$.
\end{prop}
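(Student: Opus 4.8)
The plan is to work out the non-monetary steady state of $\mathcal M(\theta)$ explicitly in terms of the CES parameters, then invert the resulting map $(\alpha,\rho,A)\mapsto (k,\lambda_1,\lambda_2)$. First I would compute the relevant derivatives of $f$. Writing $g(k)=A(\alpha k^{1-\rho}+(1-\alpha))^{1/(1-\rho)}$ for the CES core (with $L=1$), we have $f(k)=g(k)+(1-\delta)k$, so $f'(k)=g'(k)+1-\delta$ and $f''(k)=g''(k)$. A direct calculation gives $g'(k)=A\alpha k^{-\rho}(\alpha k^{1-\rho}+1-\alpha)^{\rho/(1-\rho)}$, and differentiating once more yields $g''(k)=-A\alpha\rho(1-\alpha)k^{-\rho-1}(\alpha k^{1-\rho}+1-\alpha)^{(2\rho-1)/(1-\rho)}$, which is manifestly negative for $0<\rho\neq 1$ (and the Cobb--Douglas case $\rho=1$ is the limit). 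It is convenient to introduce the capital share variable $\sigma\coloneqq \alpha k^{1-\rho}/(\alpha k^{1-\rho}+1-\alpha)\in(0,1)$, so that one can express $kg'(k)$, $g(k)$, and $k^2 g''(k)$ compactly in terms of $g(k)$, $\sigma$, and $\rho$.

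Next I would assemble the three target equations. The NMSS condition \eqref{eq:eqcond1} with $P=0$ and the Cobb--Douglas savings rule $s(w,R)=\beta w$ reads $k=\beta(f(k)-kf'(k))=\beta(g(k)-kg'(k))$, i.e. $k=\beta g(k)(1-\sigma)$. The eigenvalue conditions are $\lambda_2=f'(k)=g'(k)+1-\delta$ and $\lambda_1=-\beta k f''(k)=-\beta k g''(k)$. Using the expressions from the first paragraph, I would rewrite these as: (a) $g(k)(1-\sigma)=k/\beta$; (b) $g'(k)=\lambda_2-(1-\delta)$, equivalently $g(k)\sigma/k=\lambda_2-(1-\delta)$ since $kg'(k)=g(k)\sigma$; and (c) $\lambda_1=-\beta kg''(k)$. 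From (a) and (b) one gets $g(k)\sigma=(\lambda_2-1+\delta)k$ and $g(k)(1-\sigma)=k/\beta$; adding gives $g(k)=k(\lambda_2-1+\delta+1/\beta)$, which pins down $A$ (through $g$) once $\sigma,\rho$ are known, and taking the ratio pins down $\sigma$ as $\sigma=\beta(\lambda_2-1+\delta)/(1+\beta(\lambda_2-1+\delta))$ — note $\sigma\in(0,1)$ precisely because $\lambda_2>1-\delta$ and $\beta>0$, so this is well-defined. Then, using $k^2g''(k)=-g(k)\rho\sigma(1-\sigma)/$ (something) — more precisely $kg''(k)=g'(k)\big(-\rho(1-\sigma)-\rho\,?\big)$; I would carry out the one-line logarithmic differentiation of $g'$ to get $kg''(k)/g'(k)=-\rho+(1-\rho)\sigma-1=-\rho(1-\sigma)-\sigma\cdot 1$... the exact coefficient is a routine computation, but the upshot is that condition (c) becomes a linear equation determining $\rho$ in terms of the already-determined $\sigma$, $g'(k)$, $\lambda_1$, $\beta$.

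Concretely, since $f''<0$ forces $\lambda_1>0$ automatically, and since $kg''(k)/g'(k)$ is an affine function of $\rho$ with nonzero slope (the slope involves $(1-\sigma)\neq 0$), equation (c) has a unique solution $\rho$; I would then check $\rho>0$, which should follow because $\lambda_1>0$ and the sign structure of $g''$. Finally, with $\sigma$ and $\rho$ determined, recover $\alpha$ from $\alpha k^{1-\rho}/(\alpha k^{1-\rho}+1-\alpha)=\sigma$, i.e. $\alpha=\sigma/(\sigma+(1-\sigma)k^{1-\rho})\in(0,1)$, and recover $A$ from $g(k)=A(\alpha k^{1-\rho}+1-\alpha)^{1/(1-\rho)}=k(1/\beta+\lambda_2-1+\delta)$, giving a unique $A>0$. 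Each step is an explicit bijection on the relevant domain, which yields both existence and uniqueness.

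The main obstacle I anticipate is purely bookkeeping: getting the CES derivative formulas and the share-variable substitutions exactly right, and then verifying that the solved-for $\rho$ is genuinely positive (and handling the $\rho=1$ Cobb--Douglas boundary as a limiting/degenerate case where $g''$ has a cleaner closed form). There is no conceptual difficulty — once the three equations are written in terms of $(\sigma,\rho,A)$ they decouple in the order $\sigma\to\rho\to(\alpha,A)$ — but care is needed with the exponents $\rho/(1-\rho)$ and $(2\rho-1)/(1-\rho)$ and with the signs when $\rho<1$ versus $\rho>1$. I would also double-check that $\lambda_2>1-\delta$ is exactly the condition making $\sigma\in(0,1)$, matching the hypothesis, and that no spurious constraint linking $\lambda_1$ and $\lambda_2$ sneaks in — the computation should confirm $\lambda_1$ and $\lambda_2$ can be chosen independently, consistent with the "anything goes" theme of Theorem \ref{thm:locdet_f}.
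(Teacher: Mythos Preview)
Your plan is correct and essentially identical to the paper's proof. The only cosmetic difference is that you package the quantity $\alpha k^{1-\rho}/(\alpha k^{1-\rho}+1-\alpha)$ as a share variable $\sigma$, whereas the paper works with $\alpha k^{1-\rho}$ directly; both arguments decouple in the same order (first pin down $\sigma$ from the NMSS condition and $\lambda_2$, then obtain $\rho=\lambda_1/\sigma$ from the $\lambda_1$ equation, then back out $\alpha$ and $A$), and your anticipated identity $kg''/g'=-\rho(1-\sigma)$ indeed yields $\lambda_1=\rho\sigma$, matching the paper's \eqref{eq:lambda1}--\eqref{eq:rho}.
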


he value added of Proposition \ref{prop:lambda} relative to Theorem \ref{thm:locdet_f} is that even in the canonical setting of Cobb-Douglas-CES economy, anything goes for local determinacy of NMSS. However, Proposition \ref{prop:lambda} is silent about the uniqueness of NMSS or the existence of MSS for a given $\mathcal{M}(\theta)$, which we study next. The following proposition provides a necessary and sufficient condition for the existence, uniqueness, and local determinacy of MSS.

\begin{prop}\label{prop:monetary}
In any $\mathcal{M}(\theta)$, MSS exists if and only if $\beta\delta(c-1)>1$, where $c\coloneqq \alpha^{-\frac{1}{\rho}}(\delta/A)^\frac{1-\rho}{\rho}$. MSS is unique and locally determinate whenever it exists.
\end{prop}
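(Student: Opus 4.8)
The plan is to work out the monetary steady state (MSS) equations explicitly for $\mathcal{M}(\theta)$ and reduce the existence question to a one-line inequality. At a monetary steady state with $P_t=P>0$ constant, the no-arbitrage condition \eqref{eq:noarbitrage} forces $f'(k)=1$, i.e. the gross rental rate equals $1$. With the CES form, $f'(k)=A\alpha\left(\alpha+(1-\alpha)k^{\rho-1}\right)^{\frac{\rho}{1-\rho}}k^{-\rho}+(1-\delta)$, so $f'(k)=1$ pins down the steady-state capital $k$ as a function of $\theta$; I would solve this for $k$ (or rather for the relevant monotone expression in $k$). The second MSS equation is the resource/savings condition \eqref{eq:eqcond1}, which with $s(w,R)=\beta w$ becomes $k+P=\beta w=\beta\left(f(k)-kf'(k)\right)=\beta\left(f(k)-k\right)$ since $f'(k)=1$. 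Thus a valid MSS requires $P=\beta(f(k)-k)-k>0$, i.e. $\beta f(k)>(1+\beta)k$, where $k$ is the capital level determined by $f'(k)=1$. The main computational task is to substitute the CES expressions into $f'(k)=1$ and into $\beta f(k)>(1+\beta)k$ and simplify; I expect everything to collapse into the stated condition $\beta\delta(c-1)>1$ with $c=\alpha^{-1/\rho}(\delta/A)^{(1-\rho)/\rho}$. Concretely, from $f'(k)=1$ one gets $A\alpha(\cdots)^{\frac{\rho}{1-\rho}}k^{-\rho}=\delta$, which should let me express the bracketed CES aggregator and hence $f(k)$ itself as a clean multiple of $k$; the ratio $f(k)/k$ will then be a constant depending only on $\theta$, and the inequality $\beta f(k)/k>1+\beta$ rearranges to $\beta\delta(c-1)>1$.

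**Uniqueness.** Given $\theta$, I must argue that the equation $f'(k)=1$ has at most one solution $k>0$ and that $P$ is then uniquely determined. Since $f''<0$ strictly for the CES production function (the Inada-type curvature holds), $f'$ is strictly decreasing, so $f'(k)=1$ has a unique root whenever a root exists; the existence of that root, and the sign constraint $P>0$, are exactly what the inequality $\beta\delta(c-1)>1$ encodes (one should check it also guarantees $f'(k)=1$ is solvable in the admissible range, e.g. that $f'(0^+)>1>f'(\infty)=1-\delta$, which holds by the Inada condition on $f$). So uniqueness of MSS is essentially immediate once the existence inequality is in hand.

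**Local determinacy.** This is where I would invoke Theorem \ref{thm:locdet_b}: at MSS, $(k,P)\gg 0$, and I need $f''<0$ and $1-s_Rf''>0$. For Cobb-Douglas utility $s_R=0$, so $1-s_Rf''=1>0$ automatically, and $f''<0$ holds for the CES function. Hence Theorem \ref{thm:locdet_b} applies verbatim and MSS is locally determinate. (Equivalently, one notes $\gamma_v\equiv 1$ for $v(z)=\log z$, so the sufficient condition $\gamma_v\le 1$ of Theorem \ref{thm:locdet_b} is met.)

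**Anticipated obstacle.** The only real friction is algebraic bookkeeping: carefully manipulating the CES aggregator under the constraint $f'(k)=1$ to show that $f(k)/k$ is exactly the constant that makes $\beta f(k)>(1+\beta)k$ equivalent to $\beta\delta(c-1)>1$, and double-checking the edge case $\rho=1$ (Cobb-Douglas production) and $\delta=1$ separately, since the CES formula is piecewise and $c$ involves $\rho$ in the exponent. I would also verify that the derived $k$ and $P$ indeed satisfy all equilibrium conditions (not just the steady-state versions), but that is automatic since a constant-$P$ solution of \eqref{eq:eqcond1}–\eqref{eq:noarbitrage} is by definition an MSS.
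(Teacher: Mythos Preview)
Your proposal is correct and mirrors the paper's own proof: the paper likewise derives $f'(k)=1$ from no-arbitrage, solves for the MSS capital via the substitution $x=\alpha k^{1-\rho}$ (obtaining $x=(1-\alpha)/(c-1)$, which already forces $c>1$), reduces $P=\beta w-k>0$ to $\beta\delta(c-1)>1$, invokes strict concavity of $f$ for uniqueness, and applies Theorem~\ref{thm:locdet_b} (with $\gamma_v\equiv 1$, equivalently $s_R=0$) for local determinacy. One small correction to your sketch: the CES production function does \emph{not} satisfy the Inada condition $f'(0^+)=\infty$ when $\rho>1$ (in fact $f'(0^+)=A\alpha^{1/(1-\rho)}+1-\delta$ is finite), so solvability of $f'(k)=1$ is not automatic from Inada; it is the condition $c>1$, itself implied by $\beta\delta(c-1)>1$, that guarantees a positive root, exactly as the paper's algebra makes explicit.
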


The following proposition characterizes the existence, uniqueness, and local determinacy of NMSS for the case $\rho\le 1$.

\begin{prop}\label{prop:rho<1}
In any $\mathcal{M}(\theta)$ with $\rho\le 1$, there exists a unique NMSS. Furthermore, NMSS is locally indeterminate ($\lambda_1,\lambda_2\in (0,1)$) if and only if MSS exists.
\end{prop}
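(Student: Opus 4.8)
The plan is to work out the non-monetary steady-state equation explicitly in the Cobb-Douglas-CES case and then check the two eigenvalue conditions against the MSS-existence condition from Proposition~\ref{prop:monetary}. First I would use $s(w,R)=\beta w$ together with the equilibrium condition \eqref{eq:eqcond1} at $P=0$, which gives $k = \beta w = \beta(f(k)-kf'(k))$, i.e. $k = \beta F_L(k,1)$. With the CES form, $F_L(k,1) = A(1-\alpha)\bigl(\alpha k^{1-\rho}+(1-\alpha)\bigr)^{\rho/(1-\rho)}$ (for $\rho\neq 1$), so the steady-state equation becomes a single scalar equation in $k$. I would show that the map $k\mapsto \beta F_L(k,1)/k$ is strictly decreasing on $(0,\infty)$ when $\rho\le 1$: the wage $F_L(k,1)$ is increasing and concave in $k$, and dividing by $k$ kills any multiplicity; checking the limits as $k\to 0$ and $k\to\infty$ (using $\rho\le 1$ so that the CES aggregator behaves appropriately at the boundaries) pins down that the equation $\beta F_L(k,1)=k$ has exactly one positive root. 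This establishes existence and uniqueness of NMSS.

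Next I would compute the eigenvalues at that root. From the excerpt, $\lambda_1 = -\beta k f''(k)$ and $\lambda_2 = f'(k) = R$. Since $f''<0$ always holds for the CES production function with $\rho>0$, we have $\lambda_1>0$ and $\lambda_2 = R > 1-\delta > 0$ automatically. So local indeterminacy ($\lambda_1,\lambda_2\in(0,1)$) is equivalent to the two inequalities $\lambda_1<1$ and $R<1$. The key is to show that, at the NMSS root, both of these hold simultaneously if and only if MSS exists. For $\lambda_2=R<1$: I would differentiate the relation defining the NMSS and relate $R=f'(k)$ to the marginal product, using the steady-state identity $k=\beta(f(k)-kf'(k))$ to solve for $f'(k)$ in terms of the CES parameters and show $R<1 \iff \beta\delta(c-1)>1$, reproducing exactly the MSS-existence condition of Proposition~\ref{prop:monetary} (this is not a coincidence: $R<1$ at NMSS is precisely the dynamic-inefficiency condition that makes a bubble sustainable). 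For $\lambda_1<1$: using $\lambda_1=-\beta k f''(k)$ and the explicit CES second derivative, together with the steady-state equation, I would reduce $\lambda_1<1$ to an algebraic inequality and show it is implied by $R<1$ when $\rho\le 1$ — intuitively because $\rho\le 1$ bounds the curvature $-kf''(k)/f'(k)$ (a function of the capital share and $\rho$) in a way that forces $\lambda_1<\lambda_2<1$.

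The main obstacle I anticipate is the $\lambda_1<1$ step: unlike $\lambda_2<1$, which maps cleanly onto the MSS condition, showing $\lambda_1<1$ requires controlling the product $-\beta k f''(k)$ using the steady-state relation, and the algebra of the CES second derivative is where the restriction $\rho\le 1$ genuinely enters. I would handle this by expressing everything in terms of the capital income share $\sigma\coloneqq kf'(k)/f(k)$ at the steady state (noting the steady-state equation $k=\beta(1-\sigma)f(k)$ ties $\sigma$ to $\beta$ and $k$), writing $-kf''(k)/f'(k) = \rho(1-\sigma)/(\text{something involving }\delta)$ from the CES functional form, and then verifying the chain $\lambda_1 = \lambda_2 \cdot \bigl(-kf''(k)/f'(k)\bigr)\cdot\beta(1-\sigma)/(\beta(1-\sigma)) $-type identity so that $\lambda_1<1$ follows from $\lambda_2<1$ and an inequality on the share-and-$\rho$ factor that holds exactly when $\rho\le 1$. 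The remaining directions (if MSS exists then NMSS is indeterminate, and conversely) then follow by stringing these equivalences together with Proposition~\ref{prop:monetary}.
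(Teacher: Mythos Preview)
Your outline is broadly correct and close to the paper's argument for existence/uniqueness and for the equivalence $\lambda_2<1 \iff$ MSS exists, but you have misjudged where the work is. The step you flag as the ``main obstacle'' --- showing $\lambda_1<1$ --- is in fact the easiest part, and the capital-share identity you sketch is both garbled (as written it reduces to $\lambda_1=\lambda_1$) and unnecessary.

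Here is why. Once you substitute the steady-state relation $k=\beta A(1-\alpha)g^\rho$ (your $k=\beta F_L(k,1)$) into $\lambda_1=-\beta k f''(k)$ to eliminate $A$, a two-line computation gives the closed form
\[
\lambda_1=\rho\,\frac{\alpha k^{1-\rho}}{\alpha k^{1-\rho}+1-\alpha}.
\]
The fraction lies in $(0,1)$, so $\lambda_1<\rho\le 1$ \emph{unconditionally} whenever $\rho\le 1$; no appeal to $\lambda_2<1$ or to share arithmetic is needed. (For $\rho=1$ one gets $\lambda_1=\alpha<1$ directly.) This is exactly what the paper does: it derives this formula in the course of proving Proposition~\ref{prop:lambda} and then simply cites it.

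For the remaining equivalence $\lambda_2<1\iff$ MSS exists, your plan to verify $R<1\iff \beta\delta(c-1)>1$ algebraically will work, but the paper's route is cleaner: after the substitution $x=\alpha k^{1-\rho}$, both the NMSS equation and the MSS equation become statements about the single function $\psi(x)=(x+1-\alpha)^\rho/x$. Since $\psi$ is strictly decreasing for $\rho<1$, one compares $x_b$ and $x_f$ directly, translates back to $k_b$ versus $k_f$, and uses strict concavity of $f$ to conclude $f'(k_f)<1=f'(k_b)$ exactly when MSS exists. This avoids carrying the constant $c$ through the algebra.
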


Proposition \ref{prop:rho<1} implies that with a CES production function with elasticity of substitution $1/\rho\ge 1$ (including Cobb-Douglas), whenever MSS exists, NMSS is locally indeterminate. This is consistent with the textbook phase diagram of \citet[\S5.2]{BlanchardFischer1989}. However, empirical evidence suggests $1/\rho<1$ and hence $\rho>1$ \citep*{OberfieldRaval2021,GechertHavranekIrsovaKolcunova2022}. The following theorem provides a characterization for the case $\rho>1$.

\begin{prop}\label{prop:rho>1}
In any $\mathcal{M}(\theta)$ with $\rho>1$, the following statements are true.
\begin{enumerate}
    \item\label{item:rho1} NMSS exists if and only if $\rho^\rho\alpha [\beta A(\rho-1)]^{1-\rho}\le 1$. The number of NMSS is 2 if the inequality is strict and 1 if equality holds. 
    \item\label{item:rho2} Whenever MSS exists, there exist exactly two NMSS, one of which is locally indeterminate ($\lambda_1,\lambda_2\in (0,1)$) and the other unstable ($\lambda_1,\lambda_2>1$).
\end{enumerate}
\end{prop}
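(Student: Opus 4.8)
The plan is to collapse the non-monetary steady-state equation to a single scalar equation via the substitution $x=k^{1-\rho}$, which (since $\rho>1$) is a strictly decreasing bijection of $(0,\infty)$ onto itself and linearizes the otherwise unwieldy powers of $g(k)\coloneqq\alpha k^{1-\rho}+(1-\alpha)$. From $f(k)=Ag(k)^{1/(1-\rho)}+(1-\delta)k$ one computes $f'(k)=A\alpha k^{-\rho}g(k)^{\rho/(1-\rho)}+(1-\delta)$, $f''(k)=-A\alpha\rho(1-\alpha)k^{-\rho-1}g(k)^{(2\rho-1)/(1-\rho)}<0$, and the wage $w(k)=f(k)-kf'(k)=A(1-\alpha)g(k)^{\rho/(1-\rho)}$. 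Since $s(w,R)=\beta w$, NMSS is characterized by $k=\beta w(k)$ (and \eqref{eq:noarbitrage} is vacuous at $P=0$); raising $k=\beta A(1-\alpha)g(k)^{\rho/(1-\rho)}$ to the power $(1-\rho)/\rho$ and writing $x=k^{1-\rho}$ turns this into $\psi(x)=0$, where $\psi(x)\coloneqq Cx^{1/\rho}-\alpha x-(1-\alpha)$ and $C\coloneqq(\beta A(1-\alpha))^{(\rho-1)/\rho}>0$.

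For statement \ref{item:rho1}: since $\rho>1$, the map $x\mapsto x^{1/\rho}$ is strictly concave, so $\psi$ is strictly concave with $\psi(0^+)=-(1-\alpha)<0$ and $\psi(x)\to-\infty$; hence $\psi$ has a unique maximizer $x^*$ (where $(C/\rho)(x^*)^{1/\rho-1}=\alpha$), and the number of roots of $\psi$ on $(0,\infty)$ — equivalently, by the bijection, the number of NMSS — is $0$, $1$, or $2$ according as $\psi(x^*)<0$, $=0$, or $>0$. Substituting $C(x^*)^{1/\rho}=\alpha\rho x^*$ gives $\psi(x^*)=\alpha(\rho-1)x^*-(1-\alpha)$ and $x^*=\beta A(1-\alpha)/(\alpha\rho)^{\rho/(\rho-1)}$, and raising the inequality $\psi(x^*)\ge0$ to the negative power $1-\rho$ and rearranging yields exactly $\rho^\rho\alpha[\beta A(\rho-1)]^{1-\rho}\le1$, with equality corresponding to the double root.

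For the eigenvalues, using $s_w=\beta$, $s_R=0$ in \eqref{eq:Dphi_f} gives $\lambda_1=-\beta kf''(k)$ and $\lambda_2=f'(k)$; rewriting in terms of $x$ and using the steady-state identity $\alpha x+(1-\alpha)=Cx^{1/\rho}$ to eliminate $C$, these collapse to $\lambda_1=\rho\alpha x/(\alpha x+(1-\alpha))$ and $\lambda_2=(1-\delta)+\alpha x/(\beta(1-\alpha))$, both positive and strictly increasing in $x$. Thus $\lambda_1=1$ at $\hat x_1\coloneqq(1-\alpha)/(\alpha(\rho-1))>0$ and $\lambda_2=1$ at $\hat x_2\coloneqq\beta\delta(1-\alpha)/\alpha>0$, and for the two roots $x_a<x_b$ of $\psi$ one has $\lambda_i(x_a)<1<\lambda_i(x_b)$ if and only if $\hat x_i\in(x_a,x_b)$; since a strictly concave $\psi$ that is negative at $0^+$ and tends to $-\infty$ is positive exactly on the open interval between its roots, this holds if and only if $\psi(\hat x_i)>0$.

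For statement \ref{item:rho2} it remains to show that MSS existence forces $\psi(\hat x_1)>0$ and $\psi(\hat x_2)>0$. Using $\alpha\hat x_1+(1-\alpha)=\rho\alpha\hat x_1$, one checks $\psi(\hat x_1)>0\iff(C/\rho)(\hat x_1)^{1/\rho-1}>\alpha\iff\hat x_1<x^*\iff\psi(x^*)>0$, so $\psi(\hat x_1)>0$ holds automatically whenever two NMSS exist. Using $\alpha\hat x_2+(1-\alpha)=(1-\alpha)(1+\beta\delta)$, one computes $\psi(\hat x_2)=(1-\alpha)\left[(\beta A)^{(\rho-1)/\rho}(\beta\delta/\alpha)^{1/\rho}-(1+\beta\delta)\right]$, and since $(\beta A)^{(\rho-1)/\rho}(\beta\delta/\alpha)^{1/\rho}=\beta\delta c$ with $c=\alpha^{-1/\rho}(\delta/A)^{(1-\rho)/\rho}$, this is positive if and only if $\beta\delta(c-1)>1$, i.e.\ if and only if MSS exists by Proposition \ref{prop:monetary}. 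Since $\psi(\hat x_2)>0$ implies $\psi(x^*)>0$, MSS existence gives exactly two NMSS with $x_a<\min(\hat x_1,\hat x_2)\le\max(\hat x_1,\hat x_2)<x_b$, whence $\lambda_1,\lambda_2\in(0,1)$ at $x_a$ (locally indeterminate) and $\lambda_1,\lambda_2>1$ at $x_b$ (unstable). The main obstacle is the algebraic bookkeeping in this last step: verifying that the MSS condition $\beta\delta(c-1)>1$ is, after the substitution, literally the inequality $\psi(\hat x_2)>0$, and that $\hat x_1$ is automatically interior to the root interval; once the substitution $x=k^{1-\rho}$ is in hand, the rest is a routine concavity argument.
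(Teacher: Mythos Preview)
Your proof is correct and, while it shares the same scalar reduction as the paper, it organizes part~\ref{item:rho2} differently. The paper works with the substitution $x=\alpha k^{1-\rho}$ and the quasi-convex function $\psi(x)=(x+1-\alpha)^\rho/x$, whose level set gives the NMSS; you instead use $x=k^{1-\rho}$ and a strictly concave $\psi$ whose \emph{roots} are the NMSS. These are equivalent reformulations, and part~\ref{item:rho1} proceeds identically in spirit.

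The genuine divergence is in part~\ref{item:rho2}. To show that MSS existence forces exactly two NMSS, the paper rewrites both the MSS condition and the NMSS condition as upper bounds on $\alpha A^{1-\rho}$ and proves the former bound is dominated by the latter via a calculus optimization over $\delta$. You bypass this entirely: you compute the point $\hat x_2$ at which $\lambda_2=1$ and show that $\psi(\hat x_2)>0$ is \emph{literally} the MSS condition $\beta\delta(c-1)>1$, whence $\max\psi>0$ and two roots exist automatically. For the eigenvalue localization, the paper separates the two NMSS for $\lambda_2$ using the MSS capital itself (since $P>0$ there forces the paper's $\psi(x_b)<\psi(x_f)=\psi(x_f')$, and quasi-convexity traps $x_b$ between them); you instead show directly that $\hat x_2$ lies strictly between the roots. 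For $\lambda_1$ both proofs coincide: the point $\hat x_1$ where $\lambda_1=1$ is exactly the maximizer $x^*$ (your notation) or minimizer $m$ (the paper's), so it is interior whenever two roots exist. Your route is more unified---both eigenvalues are handled by the same ``evaluate $\psi$ at the threshold'' device, and the implication MSS $\Rightarrow$ two NMSS comes for free---while the paper's $\lambda_2$ argument is more geometric but needs the separate optimization lemma.
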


If $\rho>1$, the conditions in Propositions \ref{prop:monetary} and \ref{prop:rho>1}\ref{item:rho1} simultaneously hold (fail) if $A>0$ is sufficiently large (small). Therefore the coexistence of MSS and NMSS as well as the nonexistence of any steady states are both possible. As discussed in the proof of Proposition \ref{prop:rho>1}, when condition \ref{item:rho1} holds, the first eigenvalues of the two NMSS satisfy $\lambda_1\le 1\le \lambda_1'$ (with equality if and only if NMSS is unique). However, the sign of $\lambda_2-1$ depends on other parameters when MSS does not exist.


\appendix

\section{Proofs}

\begin{proof}[Proof of Lemma \ref{lem:DPhi}]
A straightforward calculation yields
\begin{align*}
    D_\xi\Phi&=\begin{bmatrix}
        s_w\xi_1f''(\xi_1) & 1\\
        0 & -f'(\eta_1)
    \end{bmatrix},\\
    D_\eta\Phi&=\begin{bmatrix}
        1-s_Rf''(\eta_1) & 0\\
        -\xi_2f''(\eta_1) & 1
    \end{bmatrix}.
\end{align*}
If $D_\eta\Phi$ is nonsingular, we may apply the implicit function theorem to obtain
\begin{align*}
    D_\xi\phi&=-[D_\eta\Phi]^{-1}D_\xi\Phi\\
    &=\frac{1}{1-s_Rf''(\eta_1)}\begin{bmatrix}
        -s_w\xi_1f''(\xi_1) & -1\\
        -s_w\xi_1\xi_2f''(\xi_1)f''(\eta_1) & (1-s_Rf''(\eta_1))f'(\eta_1)-\xi_2f''(\eta_1)
    \end{bmatrix}.
\end{align*}
Setting $\xi=\eta=(k,P)$ yields \eqref{eq:Dphi_ss}.
\end{proof}

\begin{proof}[Proof of Lemma \ref{lem:reverse}]
NMSS conditions are utility maximization \eqref{eq:foc}, profit maximization $R=f'(k^*)$ and $w=f(k^*)-k^*f'(k^*)$, and market clearing condition $k^*=s$. Let us construct such a model with $f''(k^*)=c$. First, given $u$, $v$, $k^*>0$, $R>0$, and $w>k^*$, set $s=k^*$ and take a discount factor $\beta>0$ such that $u'(w-s)=\beta Rv'(Rs)$. Next, define the output $Y\coloneqq w+Rk^*$. Since $w>0$, we can take a nonnegative, increasing, and concave function $f$ such that $f(k^*)=Y$, $f'(k^*)=R$, and $f''(k^*)=c$. For instance, for small enough $\epsilon>0$, we can let
\begin{equation*}
    f(k)=Y+R(k-k^*)+\frac{c}{2}(k-k^*)^2=w+Rk+\frac{c}{2}(k-k^*)^2
\end{equation*}
for $k\in [k^*-\epsilon,k^*+\epsilon]$ and linearly extrapolate outside that range. (The condition $w>0$ ensures that this $f$ is nonnegative as long as $\epsilon>0$ is small enough.) With this $f$, we have $R=f'(k^*)$ and $w=Y-Rk^*=f(k^*)-k^*f'(k^*)$, so the first-order conditions for profit maximization are satisfied. Therefore $(k^*,0)$ is indeed NMSS.
\end{proof}

\begin{proof}[Proof of Theorem \ref{thm:locdet_f}]
For simplicity, suppose $c=0$. (The proof for $c<0$ sufficiently close to 0 is similar.) Since $f'(k^*)=R$ and $f''(k^*)=c=0$, the Jacobian \eqref{eq:Dphi_f} reduces to
\begin{equation*}
    D_\xi\phi=\begin{bmatrix}
        0 & -1 \\
        0 & R
    \end{bmatrix},
\end{equation*}
whose eigenvalues are $0,R$. If $R<1$, the steady state $(k^*,0)$ is stable, and therefore for any $(k_0,P_0)$ sufficiently close to $(k^*,0)$, there exists an equilibrium path converging to the steady state.

If $R>1$, the steady state $(k^*,0)$ is a saddle point. Therefore for any $k_0$ sufficiently close to $k^*$, there exists a unique equilibrium path converging to the steady state. On such a path, because $\set{P_t}$ is bounded and $R>1$, we have $\lim_{t\to\infty}R^{-t}P_t=0$. Since the transversality condition for asset pricing holds (see \citet[\S2]{HiranoToda2024JME}), the asset price $P_t$ must equal its fundamental value, so $P_t=0$ for all $t$.
\end{proof}

\begin{proof}[Proof of Theorem \ref{thm:locdet_b}]
A simple application of the implicit function theorem shows
\begin{subequations}\label{eq:Ds}
\begin{align}
    s_w(w,R)&=\frac{u''(w-s)}{u''(w-s)+\beta R^2v''(Rs)}\in (0,1), \label{eq:sw}\\
    s_R(w,R)&=-\frac{\beta v'(Rs)+\beta Rsv''(Rs)}{u''(w-s)+\beta R^2v''(Rs)}. \label{eq:sR}
\end{align}
\end{subequations}

In MSS, the no-arbitrage condition \eqref{eq:noarbitrage} implies $f'(k)=1$. Therefore letting $J$ be the Jacobian \eqref{eq:Dphi_ss} and $c=f''<0$, we have
\begin{align*}
    t&\coloneqq\tr J=1-\frac{(s_wk+P)c}{1-s_Rc},\\
    d&\coloneqq\det J=-\frac{s_wkc}{1-s_Rc}.
\end{align*}
Let the characteristic polynomial of $J$ be
\begin{equation*}
    p(x)\coloneqq \det(xI-J)=x^2-tx+d.
\end{equation*}
By assumption, $1-s_Rc>0$. Since $s_w>0$ by \eqref{eq:sw} and $c<0$, it follows that $p(0)=d>0$. Furthermore,
\begin{equation*}
    p(1)=1-t+d=\frac{Pc}{1-s_Rc}<0.
\end{equation*}
Therefore the two roots $\lambda_1,\lambda_2$ of $p$ satisfy $0<\lambda_1<1<\lambda_2$. Since the steady state is a saddle point, MSS is locally determinate.

Since $c<0$, the condition $1-s_Rc>0$ trivially holds if $s_R\ge 0$. Noting that $u''<0$, $v''<0$, and $v'(z)+zv''(z)=v'(z)(1-\gamma_v(z))$, by \eqref{eq:sR} a sufficient condition for $s_R\ge 0$ is $\gamma_v\le 1$.
\end{proof}

\begin{proof}[Proof of Proposition \ref{prop:lambda}]
We focus on the case $\rho\neq 1$, though the Cobb-Douglas case with $\rho=1$ can be analyzed by taking the limit $\rho\to 1$.

Define $g(k)=(\alpha k^{1-\rho}+1-\alpha)^\frac{1}{1-\rho}$. A straightforward calculation yields
\begin{align*}
    g'&=\alpha k^{-\rho}g^\rho,\\
    g''&=-\rho\alpha(1-\alpha)k^{-\rho-1}g^{2\rho-1}.
\end{align*}
The equilibrium condition \eqref{eq:eqcond1} at NMSS is
\begin{equation}
    k=\beta w=\beta A(1-\alpha)g^\rho\iff A=\frac{1}{\beta(1-\alpha)}kg^{-\rho}. \label{eq:A}
\end{equation}
Therefore $A>0$ is uniquely determined once we determine $\alpha,\rho$. At NMSS, we have
\begin{subequations}\label{eq:lambda}
\begin{align}
    \lambda_1&=-\beta kf''(k)=\beta k A \rho\alpha(1-\alpha)k^{-\rho-1}g^{2\rho-1} \notag \\
    &=\rho\alpha k^{1-\rho}g^{\rho-1}=\rho\frac{\alpha k^{1-\rho}}{\alpha k^{1-\rho}+1-\alpha}, \label{eq:lambda1}\\
    \lambda_2&=f'(k)=A\alpha k^{-\rho}g^\rho+1-\delta \notag\\
    &=\frac{\alpha}{\beta(1-\alpha)}k^{1-\rho}+1-\delta. \label{eq:lambda2}
\end{align}
\end{subequations}
Solving \eqref{eq:lambda2} for $\alpha k^{1-\rho}$ and substituting into \eqref{eq:lambda1}, we obtain
\begin{equation}
    \lambda_1=\rho\frac{\beta(\lambda_2-1+\delta)}{\beta(\lambda_2-1+\delta)+1}\iff \rho=\lambda_1\frac{\beta(\lambda_2-1+\delta)+1}{\beta(\lambda_2-1+\delta)}. \label{eq:rho}
\end{equation}
Therefore $\rho$ is uniquely determined as in \eqref{eq:rho}. Because the right-hand side of \eqref{eq:lambda2} is strictly increasing in $\alpha\in (0,1)$ and its range is $(1-\delta,\infty)$, there exists a unique $\alpha\in (0,1)$ satisfying \eqref{eq:lambda2}.
\end{proof}

We prove Propositions \ref{prop:monetary}--\ref{prop:rho>1} by establishing a series of lemmas.

\begin{lem}\label{lem:rho=1}
The conclusions of Propositions \ref{prop:monetary}, \ref{prop:rho<1} hold if $\rho=1$.
\end{lem}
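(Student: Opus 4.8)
The plan is to specialize all the steady-state equations to the Cobb-Douglas production case $\rho=1$, where they become explicit power-law equations solvable in closed form. Writing $f(k)=F(k,1)=Ak^\alpha+(1-\delta)k$, I would first record $f'(k)=A\alpha k^{\alpha-1}+1-\delta$, $f''(k)=A\alpha(\alpha-1)k^{\alpha-2}<0$, and $w=f(k)-kf'(k)=A(1-\alpha)k^\alpha$. Since utility is Cobb-Douglas, $s(w,R)=\beta w$, so $s_w=\beta$ and $s_R=0$.

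For NMSS, the equilibrium condition \eqref{eq:eqcond1} with $P=0$ reads $k=\beta A(1-\alpha)k^\alpha$, with unique positive solution $k=(\beta A(1-\alpha))^{1/(1-\alpha)}$; this establishes existence and uniqueness of NMSS. Substituting into \eqref{eq:Dphi_f}, I would compute $\lambda_1=-\beta kf''(k)=\alpha\in(0,1)$ and $\lambda_2=f'(k)=\alpha/(\beta(1-\alpha))+1-\delta\in(1-\delta,\infty)$. Because $\lambda_1\in(0,1)$ always and $\lambda_2>1-\delta\ge 0$, the NMSS is locally indeterminate precisely when $\lambda_2<1$, i.e.\ $\alpha/(1-\alpha)<\beta\delta$.

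For MSS, the no-arbitrage condition \eqref{eq:noarbitrage} forces $f'(k)=1$, i.e.\ $A\alpha k^{\alpha-1}=\delta$, with unique solution $k=(A\alpha/\delta)^{1/(1-\alpha)}$; then \eqref{eq:eqcond1} gives $P=\beta A(1-\alpha)k^\alpha-k=k(\beta\delta(1-\alpha)/\alpha-1)$. Noting that at $\rho=1$ the quantity $c=\alpha^{-1/\rho}(\delta/A)^{(1-\rho)/\rho}$ from Proposition \ref{prop:monetary} reduces to $c=1/\alpha$, the condition $P>0$ is exactly $\beta\delta(c-1)>1$; when it holds, $(k,P)$ is uniquely pinned down, so MSS exists and is unique under exactly this condition. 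Local determinacy of MSS then follows directly from Theorem \ref{thm:locdet_b}, since at MSS we have $f''<0$ and, because $s_R=0$, $1-s_Rf''=1>0$.

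It remains only to observe that $\alpha/(1-\alpha)<\beta\delta$ is equivalent to $\beta\delta(1/\alpha-1)>1$, i.e.\ $\beta\delta(c-1)>1$; hence NMSS is locally indeterminate if and only if MSS exists, which completes the verification of both propositions at $\rho=1$. There is no genuine obstacle here — the content is entirely the bookkeeping of solving explicit equations — and the only point requiring care is checking that the algebraic rearrangement of the inequality $\lambda_2<1$ matches verbatim the MSS-existence condition and that the definition of $c$ collapses correctly when $\rho=1$.
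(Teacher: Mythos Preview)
Your proposal is correct and follows essentially the same approach as the paper: both specialize to $\rho=1$, solve $k=\beta A(1-\alpha)k^\alpha$ for the unique NMSS, compute $\lambda_1=\alpha$ and $\lambda_2=\alpha/(\beta(1-\alpha))+1-\delta$, solve $f'(k)=1$ for the unique MSS candidate, compute $P=k(\beta\delta(1-\alpha)/\alpha-1)$, observe that $c=1/\alpha$ when $\rho=1$, and identify $\lambda_2<1$ with $\beta\delta(c-1)>1$. The only minor difference is that you explicitly invoke Theorem~\ref{thm:locdet_b} (via $s_R=0$) for local determinacy of MSS inside the lemma, whereas the paper defers that step to the proof of Proposition~\ref{prop:monetary} (via $\gamma_v=1$); the substance is identical.
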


\begin{proof}
If $\rho=1$, the equilibrium condition \eqref{eq:eqcond1} at NMSS is
\begin{equation*}
    k=s=\beta A(1-\alpha)k^\alpha\iff k_f\coloneqq [\beta A(1-\alpha)]^\frac{1}{1-\alpha},
\end{equation*}
which uniquely exists. At this $k_f$, \eqref{eq:lambda} implies $\lambda_1=\alpha<1$ and $\lambda_2=\frac{\alpha}{\beta(1-\alpha)}+1-\delta$. Therefore NMSS is locally indeterminate if and only if $\lambda_2<1$, which is equivalent to $\beta\delta(c-1)>1$ because $c=1/\alpha$. If MSS exists, we have
\begin{equation*}
    1=f'(k)=A\alpha k^{\alpha-1}+1-\delta\iff k_b\coloneqq (A\alpha/\delta)^\frac{1}{1-\alpha}.
\end{equation*}
The equilibrium condition \eqref{eq:eqcond1} at $k=k_b$ implies
\begin{equation*}
    P=s-k=\beta A(1-\alpha)k^\alpha-k=(A\alpha/\delta)^\frac{1}{1-\alpha}\left(\beta\delta\frac{1-\alpha}{\alpha}-1\right).
\end{equation*}
Therefore a necessary and sufficient condition for the existence (and uniqueness) of MSS is $P>0$, or equivalently, $\beta\delta(c-1)>1$.
\end{proof}

\begin{lem}\label{lem:exist_b}
Let $\rho\neq 1$ and $c\coloneqq \alpha^{-\frac{1}{\rho}}(\delta/A)^\frac{1-\rho}{\rho}$. Then MSS exists if and only if $\beta\delta(c-1)>1$, in which case the unique MSS capital is given by $k=\left(\frac{1/\alpha-1}{c-1}\right)^\frac{1}{1-\rho}$.
\end{lem}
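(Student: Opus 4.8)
The plan is to reduce the two steady-state equations to a single closed-form condition on the parameters. Recall from the proof of Proposition~\ref{prop:lambda} that, for $\rho\neq1$, $f(k)=Ag(k)+(1-\delta)k$ with $g(k)=(\alpha k^{1-\rho}+1-\alpha)^{1/(1-\rho)}$ and $g'(k)=\alpha k^{-\rho}g(k)^\rho$, so that $f'(k)=A\alpha k^{-\rho}g(k)^\rho+1-\delta$. At an MSS $(k,P)$ with $P>0$, the no-arbitrage condition~\eqref{eq:noarbitrage} forces $f'(k)=1$, i.e.
\begin{equation*}
    A\alpha k^{-\rho}g(k)^\rho=\delta .
\end{equation*}
Moreover, since Cobb--Douglas utility gives $s(w,R)=\beta w$, the equilibrium condition~\eqref{eq:eqcond1} at the steady state reads $k+P=\beta w$ with $w=f(k)-kf'(k)=f(k)-k=Ag(k)-\delta k$.

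First I would solve $f'(k)=1$ for $k$. Dividing the displayed equation by $A\alpha$, writing $g(k)^\rho=(\alpha k^{1-\rho}+1-\alpha)^{\rho/(1-\rho)}$, and raising both sides to the power $(1-\rho)/\rho$ (an invertible operation on positive numbers), the left-hand side collapses to $k^{\rho-1}(\alpha k^{1-\rho}+1-\alpha)=\alpha+(1-\alpha)k^{\rho-1}$, while the right-hand side equals $(\delta/(A\alpha))^{(1-\rho)/\rho}=c\alpha$ by the definition of $c$. Hence $f'(k)=1$ is equivalent to $(1-\alpha)k^{\rho-1}=(c-1)\alpha$, that is, $k^{1-\rho}=(1/\alpha-1)/(c-1)$. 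Since $\alpha\in(0,1)$ gives $1/\alpha-1>0$ and $t\mapsto t^{1-\rho}$ is a bijection of $(0,\infty)$, there is a unique positive $k$ solving this precisely when $c>1$, which yields the stated formula for the MSS capital.

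Next I would evaluate $P$ at this $k$. From $A\alpha k^{-\rho}g(k)^\rho=\delta$ we get $Ag(k)^\rho=\delta k^\rho/\alpha$, and multiplying by $g(k)^{1-\rho}=\alpha k^{1-\rho}+1-\alpha$ gives $Ag(k)=\delta k+\frac{\delta(1-\alpha)}{\alpha}k^\rho$. Therefore $w=Ag(k)-\delta k=\frac{\delta(1-\alpha)}{\alpha}k^\rho$, and substituting $k^{\rho-1}=(c-1)\alpha/(1-\alpha)$ gives $w=\delta(c-1)k$, so
\begin{equation*}
    P=\beta w-k=\bigl(\beta\delta(c-1)-1\bigr)k .
\end{equation*}
Thus $P>0$ if and only if $\beta\delta(c-1)>1$. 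Since $\beta,\delta>0$, this inequality already implies $c>1$, so the $k$ of the previous paragraph is well defined; conversely, any MSS has this $k$ (by no-arbitrage) and then this $P$ (by the equilibrium condition), which gives uniqueness. Altogether, MSS exists iff $\beta\delta(c-1)>1$, with capital $k=\bigl((1/\alpha-1)/(c-1)\bigr)^{1/(1-\rho)}$.

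The only real obstacle is bookkeeping with the fractional exponents. The one substantive observation is that raising $f'(k)=1$ to the power $(1-\rho)/\rho$ linearizes it in $k^{1-\rho}$ and produces exactly the constant $c\alpha$ on the right-hand side, which is what makes the closed form possible; one must also check that this power step is a genuine equivalence (both sides positive) and that no step depends on the sign of $\rho-1$. The Cobb--Douglas case $\rho=1$ is excluded here and is treated separately in Lemma~\ref{lem:rho=1}.
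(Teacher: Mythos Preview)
Your argument is correct and follows the same two-step structure as the paper: first solve $f'(k)=1$ for the unique candidate $k$, then check that $P=\beta w-k>0$. The algebra differs only in organization. The paper introduces the auxiliary variable $x=\alpha k^{1-\rho}$ and the function $\psi(x)=(x+1-\alpha)^\rho/x$, and expresses $\beta w/k$ through $\psi(x)^{1/(1-\rho)}$ before unwinding everything back to the inequality $\beta\delta(c-1)>1$. Your route---raising $f'(k)=1$ to the power $(1-\rho)/\rho$ to linearize in $k^{\rho-1}$, then computing $Ag(k)$ directly to get the closed form $w=\delta(c-1)k$---is more transparent for this lemma taken in isolation. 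The paper's detour through $\psi$ is not wasted, however: $\psi$ is the workhorse in the subsequent Lemmas on the existence and count of NMSS and in comparing $x_b$ with $x_f$, so the extra machinery is an investment that pays off downstream. Either way, the substantive content is identical.
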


\begin{proof}
The uniqueness of $k$ follows from the strict concavity of $f$.

If MSS exists, using \eqref{eq:lambda2}, we have
\begin{equation*}
    1=f'(k)=A\alpha(g/k)^\rho+1-\delta\iff g/k=(A\alpha/\delta)^{-1/\rho}.
\end{equation*}
Define $x=\alpha k^{1-\rho}$. Then this condition is equivalent to
\begin{equation}
    \frac{x+1-\alpha}{x/\alpha}=(A\alpha/\delta)^\frac{\rho-1}{\rho}\iff x=\frac{1-\alpha}{\alpha^{-\frac{1}{\rho}}(\delta/A)^\frac{1-\rho}{\rho}-1}=\frac{1-\alpha}{c-1}, \label{eq:xb}
\end{equation}
so $c>1$ is necessary. Under this condition, the steady state capital is $k=(x/\alpha)^\frac{1}{1-\rho}=\left(\frac{1/\alpha-1}{c-1}\right)^\frac{1}{1-\rho}$.

For this $k$ to be MSS, it is necessary and sufficient that $P=\beta w-k>0$, where $w$ is the wage. Using \eqref{eq:A}, we obtain
\begin{equation}
    \beta w/k=\beta A(1-\alpha)g^\rho/k=\beta A(1-\alpha)\alpha^\frac{1}{1-\rho}\psi(x)^\frac{1}{1-\rho}, \label{eq:P>0}
\end{equation}
where $\psi(x)\coloneqq (x+1-\alpha)^\rho/x$. Using \eqref{eq:xb}, we obtain
\begin{align*}
    & x+1-\alpha=\frac{c(1-\alpha)}{c-1}=cx \\
    \iff & \psi(x)=\frac{(x+1-\alpha)^\rho}{x}=c^\rho x^{\rho-1}\\
    \iff & \psi(x)^\frac{1}{1-\rho}=c^\frac{\rho}{1-\rho}\frac{c-1}{1-\alpha}=\alpha^{-\frac{1}{1-\rho}}(\delta/A)\frac{c-1}{1-\alpha}.
\end{align*}
Therefore by \eqref{eq:P>0}, we obtain
\begin{align*}
    P>0 &\iff \beta A(1-\alpha)\alpha^\frac{1}{1-\rho}\alpha^{-\frac{1}{1-\rho}}(\delta/A)\frac{c-1}{1-\alpha}>1\\
    &\iff \beta\delta(c-1)>1. \qedhere
\end{align*}
\end{proof}

\begin{proof}[Proof of Proposition \ref{prop:monetary}]
The existence and uniqueness of MSS are immediate from Lemmas \ref{lem:rho=1} and \ref{lem:exist_b}. Since $\gamma_v(z)=1$, Theorem \ref{thm:locdet_b} implies that MSS is locally determinate.
\end{proof}

\begin{lem}\label{lem:exist_f}
If $\rho<1$, a unique NMSS exists. If $\rho>1$, NMSS exists if and only if $\rho^\rho\alpha [\beta A(\rho-1)]^{1-\rho}\le 1$. The number of NMSS is 2 if the inequality is strict and 1 if equality holds.
\end{lem}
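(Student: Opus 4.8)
The plan is to reduce the existence question to a scalar equation. By \eqref{eq:A}, at NMSS the equilibrium condition with Cobb--Douglas utility is $k=\beta A(1-\alpha)g(k)^\rho$, where $g(k)=(\alpha k^{1-\rho}+1-\alpha)^{1/(1-\rho)}$. Set $h(k)\coloneqq \beta A(1-\alpha)(\alpha k^{1-\rho}+1-\alpha)^{\rho/(1-\rho)}$, so that an NMSS is exactly a fixed point of $h$ on $(0,\infty)$, equivalently a zero of $m(k)\coloneqq \log h(k)-\log k$. A short computation gives
\[
m'(k)=\frac{\rho\alpha k^{-\rho}}{\alpha k^{1-\rho}+1-\alpha}-\frac1k=\frac1k\cdot\frac{(\rho-1)x-(1-\alpha)}{x+1-\alpha},\qquad x\coloneqq \alpha k^{1-\rho},
\]
so $m'(k)$ has the sign of $(\rho-1)x-(1-\alpha)$, while $x$ is a strictly monotone function of $k$ whose direction is governed by the sign of $1-\rho$. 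If $\rho<1$, then $(\rho-1)x-(1-\alpha)<0$ for every $k>0$, so $m$ is strictly decreasing; since $m(k)\to+\infty$ as $k\to0$ (because $h(0^+)=\beta A(1-\alpha)^{1/(1-\rho)}>0$) and $m(k)\to-\infty$ as $k\to\infty$ (because $h(k)/k$ is asymptotic to a positive multiple of $k^{\rho-1}$), $m$ has a unique zero, i.e.\ a unique NMSS.

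If $\rho>1$, then $x=\alpha k^{1-\rho}$ is strictly decreasing in $k$ and ranges over all of $(0,\infty)$; hence $(\rho-1)x-(1-\alpha)$ is positive for small $k$ and negative for large $k$, changing sign exactly at $x=(1-\alpha)/(\rho-1)$, i.e.\ at $k^*\coloneqq(\alpha(\rho-1)/(1-\alpha))^{1/(\rho-1)}$. Thus $m$ is strictly increasing on $(0,k^*)$ and strictly decreasing on $(k^*,\infty)$, with $m(k)\to-\infty$ at both endpoints (now $h$ has a finite positive limit as $k\to\infty$, while $h(k)/k$ is again asymptotic to a positive multiple of $k^{\rho-1}$ as $k\to0$). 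Therefore $m$ has no zero, exactly one zero (the double root $k^*$), or exactly two zeros (one on each side of $k^*$) according to whether $m(k^*)<0$, $=0$, or $>0$. It remains to evaluate $h(k^*)/k^*$: using $x^*+1-\alpha=(1-\alpha)\rho/(\rho-1)$ and $k^*=(x^*/\alpha)^{1/(1-\rho)}$, the total exponent of $1-\alpha$, namely $1+\frac{\rho}{1-\rho}-\frac1{1-\rho}$, vanishes, leaving
\[
\frac{h(k^*)}{k^*}=\beta A(\rho-1)\,\alpha^{\frac1{1-\rho}}\rho^{\frac{\rho}{1-\rho}}.
\]
Since $1-\rho<0$, raising both sides to the power $1-\rho$ reverses the inequality, so $h(k^*)/k^*\ge1\iff \rho^\rho\alpha[\beta A(\rho-1)]^{1-\rho}\le1$; the strict and equality cases then give two and one NMSS, respectively.

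The routine ingredients are the formula for $m'$, the two pairs of limit computations, and the algebraic simplification of $h(k^*)/k^*$. The only step needing genuine care --- and it is really just bookkeeping --- is that last simplification: one must track the exponents of $\alpha$, $1-\alpha$, $\rho$, and $\rho-1$ correctly and, crucially, remember that exponentiating an inequality by the negative number $1-\rho$ flips its direction, so that $h(k^*)/k^*\ge1$ becomes $\rho^\rho\alpha[\beta A(\rho-1)]^{1-\rho}\le1$ rather than the reverse.
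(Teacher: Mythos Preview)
Your proof is correct and follows essentially the same approach as the paper: both reduce the NMSS condition to a scalar equation, show that the relevant function is strictly monotone when $\rho<1$ and strictly unimodal when $\rho>1$ by checking that the derivative has the sign of $(\rho-1)x-(1-\alpha)$ with $x=\alpha k^{1-\rho}$, and then evaluate at the critical point $x=(1-\alpha)/(\rho-1)$ to obtain the threshold $\rho^\rho\alpha[\beta A(\rho-1)]^{1-\rho}\le 1$. The only cosmetic difference is that the paper carries out the analysis in the variable $x$ via $\psi(x)=(x+1-\alpha)^\rho/x$, whereas you work directly in $k$ via $m(k)=\log h(k)-\log k$; these are equivalent after the change of variable.
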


\begin{proof}
We use the same notation as in the proof of Lemma \ref{lem:exist_b}. By \eqref{eq:P>0}, NMSS $x$ satisfies
\begin{equation*}
    \beta A(1-\alpha)\alpha^\frac{1}{1-\rho}\psi(x)^\frac{1}{1-\rho}\iff \psi(x)=\frac{1}{\alpha[\beta A(1-\alpha)]^{1-\rho}}.
\end{equation*}
A straightforward calculation yields
\begin{equation}
    \psi'(x)=\frac{(x+1-\alpha)^{\rho-1}}{x^2}((\rho-1)x-(1-\alpha)). \label{eq:psi'}
\end{equation}
If $\rho<1$, then $\psi'(x)<0$, $\psi(0)=\infty$, and $\psi(\infty)=0$, so a unique NMSS exists. If $\rho>1$, then $\psi$ achieves a unique minimum at $m\coloneqq \frac{1-\alpha}{\rho-1}$ and $\psi(0)=\psi(\infty)=\infty$. Therefore NMSS exists if and only if
\begin{equation*}
    \rho^\rho\left(\frac{1-\alpha}{\rho-1}\right)^{\rho-1}=\psi(m)\le \psi(x)=\frac{1}{\alpha[\beta A(1-\alpha)]^{1-\rho}},
\end{equation*}
which is equivalent to the desired condition. Furthermore, since $\psi'(x)\gtrless 0$ according as $x\lessgtr m$, we obtain the claim.
\end{proof}

\begin{lem}\label{lem:rho<1}
If $\rho<1$, then MSS exists if and only if $\lambda_2<1$ at NMSS.
\end{lem}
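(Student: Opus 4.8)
The plan is to reduce both directions of the equivalence to a single comparison: the NMSS capital $k_f$ (which is unique by Lemma~\ref{lem:exist_f} since $\rho<1$) versus the capital $k_b$ pinned down by the no-arbitrage condition $f'(k_b)=1$. Throughout I would use the three facts already available in this setting: with Cobb--Douglas utility the savings function is $s(w,R)=\beta w$; the wage, written as a function of capital, is $w(k)=A(1-\alpha)g(k)^\rho$ with $g(k)=(\alpha k^{1-\rho}+1-\alpha)^{1/(1-\rho)}$, as recorded in \eqref{eq:A}; and $f$ is smooth and strictly concave with $f'$ strictly decreasing and $f'(0^+)=\infty$ (the last being immediate from $f'(k)=A\alpha k^{-\rho}g(k)^\rho+1-\delta$).

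The one genuine computation is the monotonicity of $k\mapsto k/w(k)$ on $(0,\infty)$. Using $g'(k)=\alpha k^{-\rho}g(k)^\rho$, I would compute the logarithmic derivative
\[
\frac{d}{dk}\log\frac{k}{w(k)}=\frac1k-\rho\,\frac{g'(k)}{g(k)}=\frac1k\left(1-\rho\,\frac{\alpha k^{1-\rho}}{\alpha k^{1-\rho}+1-\alpha}\right),
\]
which is strictly positive because $\alpha k^{1-\rho}/(\alpha k^{1-\rho}+1-\alpha)\in(0,1)$ and $\rho<1$. This is precisely where $\rho<1$ is used: for $\rho>1$ the parenthesised quantity changes sign as $k$ ranges over $(0,\infty)$. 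Since the NMSS equation $k_f=\beta w(k_f)$ says exactly that $k_f/w(k_f)=\beta$, strict monotonicity yields, for every $k>0$, the equivalence $\beta w(k)-k>0\iff k/w(k)<\beta\iff k<k_f$.

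With this in hand I would assemble the lemma. ($\Rightarrow$) If MSS $(k_b,P)$ exists with $P>0$, then evaluating \eqref{eq:noarbitrage} and \eqref{eq:eqcond1} at the steady state gives $f'(k_b)=1$ and $k_b+P=\beta w(k_b)$, so $\beta w(k_b)-k_b=P>0$, whence $k_b<k_f$ by the displayed equivalence; since $f'$ is strictly decreasing, $\lambda_2=f'(k_f)<f'(k_b)=1$. ($\Leftarrow$) If $\lambda_2=f'(k_f)<1$, then since $f'(0^+)=\infty$ and $f'$ is continuous and strictly decreasing there is a unique $k_b$ with $f'(k_b)=1$, necessarily with $k_b<k_f$ because $1=f'(k_b)>f'(k_f)$; the displayed equivalence then gives $P\coloneqq\beta w(k_b)-k_b>0$, and $(k_b,P)$ satisfies $f'(k_b)=1$ together with $k_b+P=\beta w(k_b)=s(w(k_b),1)$, so it is an MSS.

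The main pitfall to avoid is arguing via the ``excess savings'' map $h(k)=\beta w(k)-k$ directly: since $h'(k)=-\beta kf''(k)-1$ with $-\beta kf''(0^+)=\infty$ and $-\beta kf''(\infty)=0$, the function $h$ is single-peaked rather than monotone, so ``$h$ increasing'' is false; but the sign pattern $h(k)>0\iff k<k_f$ that the proof actually needs still holds and falls out cleanly from the monotonicity of $k/w(k)$. Everything else is bookkeeping with the steady-state forms of \eqref{eq:eqcond1}--\eqref{eq:noarbitrage}.
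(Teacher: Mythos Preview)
Your proof is correct and follows essentially the same approach as the paper's: both reduce the equivalence to the monotonicity of $k\mapsto k/w(k)$ (equivalently, of $x\mapsto\psi(x)^{1/(1-\rho)}$ in the paper's substitution $x=\alpha k^{1-\rho}$) for $\rho<1$, and then chain $\beta w(k_b)>k_b\iff k_b<k_f\iff f'(k_f)<1$. The only cosmetic difference is that the paper invokes the monotonicity of $\psi$ already computed in the proof of Lemma~\ref{lem:exist_f}, while you recompute the same thing directly via the log-derivative in the $k$-variable.
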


\begin{proof}
We use the same notation as in the proof of Lemma \ref{lem:exist_b}. Let $x_b,x_f$ and $k_b,k_f$ be the $x$ and $k$ in the (necessarily unique) MSS and NMSS, where $x=\alpha k^{1-\rho}$. Then $\rho<1$, \eqref{eq:P>0}, the monotonicity of $\psi$, and the strict concavity of $f$ imply that MSS exists if and only if
\begin{align}
    & \beta A(1-\alpha)\alpha^\frac{1}{1-\rho}\psi(x_b)^\frac{1}{1-\rho}>1=\beta A(1-\alpha)\alpha^\frac{1}{1-\rho}\psi(x_f)^\frac{1}{1-\rho} \label{eq:psixbf}\\
    \iff & \psi(x_b)>\psi(x_f)\iff x_b<x_f \notag \\
    \iff & k_b<k_f \iff 1=f'(k_b)>f'(k_f)=\lambda_2. \notag \qedhere
\end{align}
\end{proof}

\begin{proof}[Proof of Proposition \ref{prop:rho<1}]
The case $\rho=1$ follows from Lemma \ref{lem:rho=1}. If $\rho<1$, then \eqref{eq:lambda1} implies $\lambda_1\in (0,1)$ at NMSS. The conclusion holds by Lemmas \ref{lem:exist_b} and \ref{lem:rho<1}.
\end{proof}

\begin{proof}[Proof of Proposition \ref{prop:rho>1}]
\ref{item:rho1} Obvious from Lemma \ref{lem:exist_f}.

\ref{item:rho2} Solving for $\alpha A^{1-\rho}$, the existence conditions for MSS in Proposition \ref{prop:monetary} and for NMSS in Lemma \ref{lem:exist_f} are, respectively,
\begin{subequations}
\begin{align}
    \alpha A^{1-\rho}&<\delta(1/\beta+\delta)^{-\rho}, \label{eq:cond_b}\\
    \alpha A^{1-\rho}&\le \rho^{-\rho}[\beta(\rho-1)]^{\rho-1}. \label{eq:cond_f}
\end{align}
\end{subequations}
Using calculus, it is straightforward to show that the maximum of the right-hand side of \eqref{eq:cond_b} over $\delta>0$ is achieved at $\delta=\frac{1}{\beta(\rho-1)}$, in which case the maximum value is equal to the right-hand side of \eqref{eq:cond_f}. Therefore whenever MSS exists, there exist exactly two NMSS.

Suppose MSS $x_b$ exists and denote the two NMSS by $x_f<x_f'$. Then $\rho>1$ and \eqref{eq:psixbf} imply $\psi(x_b)<\psi(x_f)=\psi(x_f')$. Since $\psi$ is strictly quasi-concave, it must be $x_f<x_b<x_f'$. Since $x=\alpha k^{1-\rho}$ and $\rho>1$, we have $k_f>k_b>k_f'$. Since $f$ is strictly concave, we have
\begin{equation*}
    0<\lambda_2=f'(k_f)<f'(k_b)=1<f'(k_f')=\lambda_2'.
\end{equation*}
Again, the strict quasi-concavity of $\psi$ implies $x_f<m<x_f'$, where $m=\frac{1-\alpha}{\rho-1}$. Therefore using \eqref{eq:lambda1}, we obtain
\begin{equation*}
    0<\lambda_1=\rho\frac{x_f}{x_f+1-\alpha}<\rho\frac{m}{m+1-\alpha}=1<\rho\frac{x_f'}{x_f'+1-\alpha}=\lambda_1'.
\end{equation*}
Note this last inequality is independent of whether $x_b$ exists or not.
\end{proof}

\section{Determinacy in OLG endowment economies}\label{sec:endowment}

This appendix reviews the local determinacy of monetary and NMSS in the classical two-period OLG endowment economies.

\subsection{Model}

Consider the classical two-period OLG model with a single perishable good. Preferences are the same as in \S\ref{sec:production}. Each period, the young and old have endowments $(a,b)$, where $a>0$ and $b\ge 0$. In addition, the initial old are endowed with an intrinsically useless asset (\ie, an asset paying no dividends like fiat money) in unit supply.

Letting $P_t\ge 0$ be the price of the asset at time $t$ in units of the consumption good, the budget constraints of generation $t$ are
\begin{align*}
    &\text{Young}: & y_t+P_tx_t&=a,\\
    &\text{Old}: & z_{t+1}&=b+P_{t+1}x_t,
\end{align*}
where $x_t$ is asset holdings. As usual, a competitive equilibrium is defined by a sequence $\set{(P_t,x_t,y_t,z_t)}_{t=0}^\infty$ such that
\begin{enumerate*}
    \item each generation maximizes utility subject to the budget constraints,
    \item commodity markets clear, so $y_t+z_t=a+b$,
    \item asset markets clear, so $x_t=1$.
\end{enumerate*}

In any equilibrium, the budget constraints imply the equilibrium allocation $(y_t,z_{t+1})=(a-P_t,b+P_{t+1})$. Therefore the equilibrium is completely determined by the price sequence $\set{P_t}_{t=0}^\infty$.

\subsection{Equilibrium analysis}
Because the non-monetary equilibrium is obviously unique ($P_t\equiv 0$), we focus on monetary equilibria. Take any monetary equilibrium, so $P_t>0$ for all $t$. Using the budget constraints to eliminate $(y_t,z_{t+1})$, generation $t$ seeks to maximize
\begin{equation*}
    u(a-P_tx_t)+\beta v(b+P_{t+1}x_t).
\end{equation*}
Taking the first-order condition and imposing the market clearing condition $x_t=1$, we obtain the equilibrium condition
\begin{equation}
    -u'(a-P_t)P_t+\beta v'(b+P_{t+1})P_{t+1}=0. \label{eq:eqcond}
\end{equation}
The following lemma provides a necessary and sufficient condition for the existence of MSS.

\begin{lem}\label{lem:ss_bubbly}
There exists MSS if and only if $u'(a)<\beta v'(b)$.
\end{lem}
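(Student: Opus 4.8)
The plan is to collapse the existence of MSS to a single scalar equation and solve it by the intermediate value theorem. In a monetary steady state we have $P_t \equiv P > 0$, so the equilibrium condition \eqref{eq:eqcond} reduces to $-u'(a-P)P + \beta v'(b+P)P = 0$. Dividing by $P > 0$ gives the steady-state equation $u'(a-P) = \beta v'(b+P)$. Feasibility of the young's consumption $y = a - P$ requires $0 < P < a$, so MSS exists if and only if this equation has a root in the open interval $(0,a)$. Note that dividing by $P$ is exactly what distinguishes the monetary case from the non-monetary steady state $P = 0$, which solves the undivided equation trivially.

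Next I would define the excess-marginal-utility function $h(P) \coloneqq u'(a-P) - \beta v'(b+P)$ on $[0,a)$ and record its properties. It is continuous, and since $u'' < 0$ and $v'' < 0$ we get $h'(P) = -u''(a-P) - \beta v''(b+P) > 0$, so $h$ is strictly increasing; hence any root in $(0,a)$ is unique. The Inada condition $u'(0) = \infty$ gives $h(P) \to +\infty$ as $P \uparrow a$, while $h(0) = u'(a) - \beta v'(b)$. Combining these: if $h(0) \ge 0$ then by monotonicity $h > 0$ throughout $(0,a)$ and no MSS exists; if $h(0) < 0$ then the intermediate value theorem (using $h(P) \to +\infty$ near $a$) yields a unique root $P^* \in (0,a)$, which is the MSS. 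Since $h(0) < 0$ is equivalent to $u'(a) < \beta v'(b)$, this is precisely the claimed criterion.

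I do not anticipate any genuine difficulty. The only points needing a touch of care are that the relevant domain for $P$ is the bounded interval $(0,a)$ rather than $[0,\infty)$, and that the Inada condition on $u$ (not merely on $v$) is what supplies the sign change of $h$ near the right endpoint, guaranteeing a crossing whenever $h(0) < 0$.
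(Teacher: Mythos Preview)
Your proof is correct and essentially identical to the paper's: both reduce the MSS condition to a strictly monotone scalar function on $(0,a)$ (your $h$ is the negative of the paper's $\psi$), then invoke the Inada condition on $u$ at the right endpoint and apply the intermediate value theorem. The only difference is the sign convention.
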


\begin{proof}
If MSS $P>0$ exists, setting $P_t=P_{t+1}=P$ in \eqref{eq:eqcond} and dividing by $P>0$, we obtain
\begin{equation*}
    \psi(P)\coloneqq -u'(a-P)+\beta v'(b+P)=0.
\end{equation*}
Since $\psi'(P)=u''(a-P)+\beta v''(b+P)<0$, it follows that
\begin{equation*}
    0=\psi(P)<\psi(0)=-u'(a)+\beta v'(b)\implies u'(a)<\beta v'(b).
\end{equation*}

Conversely, suppose $u'(a)<\beta v'(b)$. Then
\begin{equation*}
    \psi(0)=-u'(a)+\beta v'(b)>0>-\infty=-u'(0)+\beta v'(b+a)=\psi(a).
\end{equation*}
By the intermediate value theorem, there exists $P>0$ such that $\psi(P)=0$. Furthermore, since $f$ is strictly decreasing, such a $P$ is unique. Therefore there exists a unique MSS.
\end{proof}

We next study the local determinacy of equilibria around the steady states. To this end, we write the equilibrium condition \eqref{eq:eqcond} as $\Phi(P_t,P_{t+1})=0$, where
\begin{equation}
    \Phi(\xi,\eta)=-u'(a-\xi)\xi+\beta v'(b+\eta)\eta. \label{eq:Phi}
\end{equation}
To apply the implicit function theorem, we compute the partial derivatives
\begin{subequations}\label{eq:Phi_partial}
\begin{align}
    \Phi_\xi(\xi,\eta)&=-u'(a-\xi)+u''(a-\xi)\xi, \label{eq:Phi_xi}\\
    \Phi_\eta(\xi,\eta)&=\beta v'(b+\eta)+\beta v''(b+\eta)\eta. \label{eq:Phi_eta}
\end{align}
\end{subequations}
Therefore if $\Phi_\eta\neq 0$, we may locally solve $\Phi(\xi,\eta)=0$ as $\eta=\phi(\xi)$, where
\begin{equation}
    \phi'(\xi)=-\frac{\Phi_\xi}{\Phi_\eta}=\frac{u'(a-\xi)-u''(a-\xi)\xi}{\beta v'(b+\eta)+\beta v''(b+\eta)\eta}. \label{eq:phi'}
\end{equation}
Let $P$ be a steady state. If $\abs{\phi'(P)}\neq 1$, the Hartman-Grobman theorem \citep[Theorem 4.6]{Chicone2006} implies that the local behavior of the dynamical system $\Phi(P_t,P_{t+1})=0$ (or $P_{t+1}=\phi(P_t)$) is qualitatively the same as that of the linearized system
\begin{equation*}
    P_{t+1}-P=\phi'(P)(P_t-P).
\end{equation*}
Therefore if $\abs{\phi'(P)}>1$, then the unique $\set{P_t}$ converging to $P$ is $P_t=P$ and the equilibrium is \emph{locally determinate}; if $\abs{\phi'(P)}<1$, then for any $P_0$ sufficiently close to $P$, there exists $\set{P_t}$ converging to $P$ and the equilibrium is \emph{locally indeterminate}.

The following proposition characterizes the local determinacy of NMSS.

\begin{prop}[\citealp{Gale1973}, Theorem 4]\label{prop:b>0}
Let $b>0$.
\begin{enumerate}
    \item If $u'(a)>\beta v'(b)$, NMSS is locally determinate, \ie, if $\set{P_t}_{t=0}^\infty$ is an equilibrium with $P_t\to 0$, then $P_t=0$ for all $t$.
    \item If $u'(a)<\beta v'(b)$, NMSS is locally indeterminate, \ie, for any $P_0>0$ sufficiently close to 0, there exists a monetary equilibrium $\set{P_t}_{t=0}^\infty$ with $P_t\to 0$.
\end{enumerate}
\end{prop}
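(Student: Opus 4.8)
The plan is to linearize the scalar dynamical system $P_{t+1}=\phi(P_t)$ at the fixed point $P=0$ and read off local determinacy from the magnitude of $\phi'(0)$, exactly as in the discussion preceding the statement. First I would note that $P=0$ is a steady state since $\Phi(0,0)=0$ by \eqref{eq:Phi}, and that the hypothesis $b>0$ is precisely what guarantees $v'(b)<\infty$; hence $\Phi_\eta(0,0)=\beta v'(b)>0$ by \eqref{eq:Phi_eta}, so the implicit function theorem applies on a neighborhood of $0$ and yields a $C^1$ map $\phi$ with $\phi(0)=0$. Evaluating \eqref{eq:phi'} at $\xi=\eta=0$, the terms involving $u''$ and $v''$ drop out and I obtain $\phi'(0)=u'(a)/(\beta v'(b))>0$.

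For part (i), the hypothesis $u'(a)>\beta v'(b)$ gives $\phi'(0)>1$, so $\abs{\phi'(0)}\neq 1$ and the Hartman--Grobman theorem \citep[Theorem 4.6]{Chicone2006} applies: $0$ is a repelling fixed point, so there is a neighborhood of $0$ such that the only forward orbit that stays in it (in particular, the only one converging to $0$) is the constant orbit $P_t\equiv 0$. Hence any equilibrium $\set{P_t}_{t=0}^\infty$ with $P_t\to 0$ must satisfy $P_t=0$ for all $t$, i.e.\ NMSS is locally determinate.

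For part (ii), the hypothesis $u'(a)<\beta v'(b)$ gives $\phi'(0)\in(0,1)$, so again $\abs{\phi'(0)}\neq 1$ and $0$ is an attracting fixed point: every orbit starting sufficiently close to $0$ converges to $0$. The remaining point is to check that such orbits are genuine monetary equilibria, i.e.\ that $P_t>0$ and $y_t=a-P_t>0$ for all $t$. Since $\phi'(0)>0$, the map $\phi$ is strictly increasing on a neighborhood of $0$ with $\phi(0)=0$; therefore for $P_0>0$ small enough the orbit $P_t=\phi^t(P_0)$ is strictly positive and strictly decreasing, hence bounded above by $P_0<a$. This is a monetary equilibrium with $P_t\to 0$, so NMSS is locally indeterminate.

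The argument is essentially a routine application of the linearization machinery already assembled; the only step needing care is the last one, namely confirming that the convergent paths in part (ii) respect the equilibrium constraints $0<P_t<a$, which is handled by the sign of $\phi'(0)$ forcing $\phi$ to be monotone near the steady state. Beyond that, the one place the hypotheses do real work is ensuring $\phi$ is well-defined and $C^1$ on a full neighborhood of $0$, which is exactly where $b>0$ (finiteness of $v'(b)$, hence $\Phi_\eta(0,0)\neq 0$) enters; I do not expect any genuine obstacle.
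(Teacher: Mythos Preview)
The proposal is correct and follows the same approach as the paper: compute $\phi'(0)=u'(a)/(\beta v'(b))$ from \eqref{eq:phi'} and read off (in)determinacy from whether this exceeds $1$. Your write-up is simply more explicit than the paper's two-line proof, spelling out why $b>0$ makes the implicit function theorem applicable and why the convergent orbits in part (ii) are bona fide monetary equilibria.
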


\begin{proof}
Setting $\xi=\eta=P=0$ in \eqref{eq:phi'}, we obtain $\phi'(0)=u'(a)/\beta v'(b)>0$. The claim follows from the definition of local (in)determinacy.
\end{proof}

Combining Lemma \ref{lem:ss_bubbly} and Proposition \ref{prop:b>0}, we conclude that whenever MSS exists, there exists an asymptotically monetary equilibrium as well as a continuum of monetary but asymptotically non-monetary equilibria. \citet{Wallace1980} refers to this fact as the \emph{tenuousness} of monetary equilibria.

We next study the local determinacy of MSS. The following proposition shows that MSS is locally determinate whenever the relative risk aversion of the old is sufficiently low.\footnote{When the relative risk aversion of the old is sufficiently high, it is well known that cyclic and chaotic equilibrium dynamics are possible in monetary economies \citep{Grandmont1985}. See \citet{BoldrinWoodford1990} for a review of this type of models.}

\begin{prop}\label{prop:locdet_b}
Let $u'(a)<\beta v'(b)$. Then MSS $P>0$ is locally determinate if $\gamma_v(b+P)<1+b/P$, where $\gamma_v(z)\coloneqq -zv''(z)/v'(z)>0$ denotes the relative risk aversion of $v$.
\end{prop}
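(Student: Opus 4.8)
The plan is to show that under the stated hypothesis the derivative $\phi'(P)$ of the policy function at the monetary steady state satisfies $\phi'(P)>1$, so that $\abs{\phi'(P)}>1$ and local determinacy follows from the Hartman--Grobman argument already spelled out just before the statement. So the whole proof is a short computation with \eqref{eq:phi'}.

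First I would record the steady-state identity: setting $P_t=P_{t+1}=P$ in \eqref{eq:eqcond} and dividing by $P>0$ gives $u'(a-P)=\beta v'(b+P)$ (this is exactly $\psi(P)=0$ from the proof of Lemma \ref{lem:ss_bubbly}). Next I would evaluate \eqref{eq:phi'} at $\xi=\eta=P$ and rewrite the denominator $\Phi_\eta(P,P)=\beta v'(b+P)+\beta v''(b+P)P$ as $\beta v'(b+P)\left(1-\gamma_v(b+P)\frac{P}{b+P}\right)$. The hypothesis $\gamma_v(b+P)<1+b/P=\frac{b+P}{P}$ is precisely the statement that $\gamma_v(b+P)\frac{P}{b+P}<1$, hence this denominator is strictly positive; in particular $\Phi_\eta(P,P)\neq 0$, so the implicit function theorem applies and $\phi$ is well-defined and differentiable near $P$, with $\phi'(P)>0$ since the numerator $u'(a-P)-u''(a-P)P$ is also positive (because $u'>0$, $u''<0$, $P>0$).

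It then remains to show $\phi'(P)>1$, i.e.\ that the numerator exceeds the denominator. Subtracting and using the steady-state identity $u'(a-P)=\beta v'(b+P)$ to cancel the first-order terms, this reduces to $-u''(a-P)P>\beta v''(b+P)P$, i.e., after dividing by $P>0$, to $-u''(a-P)>\beta v''(b+P)$. But the left side is positive and the right side is negative, so the inequality is automatic. Hence $\phi'(P)>1$, and since a one-dimensional fixed point with $\abs{\phi'(P)}>1$ is repelling, the only equilibrium sequence converging to $P$ is the constant one, which is the claimed local determinacy.

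I do not expect a genuine obstacle; the only point requiring care is to use the hypothesis exactly where it is needed, namely to guarantee $\Phi_\eta(P,P)>0$ (equivalently that $\phi'(P)$ is positive rather than negative), after which the comparison $\phi'(P)>1$ holds unconditionally. One could optionally remark that the condition is sufficient but not necessary: when $\gamma_v(b+P)>1+b/P$ the denominator is negative and determinacy would instead require $-u''(a-P)+\beta v''(b+P)>-2u'(a-P)/P$, whose sign is ambiguous, paralleling the gap between sufficiency and necessity in Theorem \ref{thm:locdet_b}.
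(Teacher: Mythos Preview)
Your proof is correct and follows essentially the same approach as the paper: both use the steady-state identity $u'(a-P)=\beta v'(b+P)$ to reduce $\Phi_\xi+\Phi_\eta$ to $u''(a-P)P+\beta v''(b+P)P<0$, and both use the hypothesis $\gamma_v(b+P)<1+b/P$ precisely to guarantee $\Phi_\eta>0$, from which $\phi'(P)>1$ follows. The only cosmetic difference is ordering: the paper first computes $\Phi_\xi+\Phi_\eta<0$ and then divides by $\Phi_\eta>0$, whereas you first establish $\Phi_\eta>0$ and then compare numerator to denominator directly.
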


\begin{proof}
Setting $\xi=\eta=P>0$ in \eqref{eq:Phi_partial} and using the equilibrium condition \eqref{eq:eqcond}, at MSS we obtain
\begin{equation*}
    \Phi_\xi+\Phi_\eta=u''(a-P)P+\beta v''(b+P)P<0.
\end{equation*}
If $\Phi_\eta>0$, dividing both sides by $\Phi_\eta$, we obtain
\begin{equation*}
    0>\frac{\Phi_\xi}{\Phi_\eta}+1=-\phi'(P)+1\iff \phi'(P)>1,
\end{equation*}
implying local determinacy. Therefore it suffices to show $\Phi_\eta>0$. Dividing \eqref{eq:Phi_eta} by $\beta v'(b+\eta)>0$ and setting $\eta=P>0$, we obtain
\begin{equation*}
    \frac{\Phi_\eta}{\beta v'(b+P)}=1-\gamma_v(b+P)\frac{P}{b+P}.
\end{equation*}
Therefore $\Phi_\eta>0$ if $\gamma_v(b+P)<1+b/P$.
\end{proof}

Proposition \ref{prop:b>0} assumes $b>0$. We next consider the case $b=0$.

\begin{prop}[\citealp{Scheinkman1980}]\label{prop:b=0}
Let $b=0$. If $\liminf_{z\to 0}zv'(z)>0$, then NMSS is locally determinate. The assumption holds if $\gamma_v(z)\ge 1$ in a neighborhood of $z=0$.
\end{prop}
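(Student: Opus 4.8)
The plan is to argue by contradiction, abandoning the linearization strategy used for Propositions~\ref{prop:b>0} and \ref{prop:locdet_b}: when $b=0$ the Inada condition $v'(0)=\infty$ makes the map $\phi$ of \eqref{eq:phi'} degenerate at the steady state $P=0$ (the denominator $\beta v'(\eta)+\beta v''(\eta)\eta$ blows up), so the Hartman--Grobman theorem does not apply and the relevant behavior near $0$ is instead governed by the behavior of $zv'(z)$ as $z\to 0$. The first observation is that it suffices to rule out a \emph{monetary} equilibrium $\set{P_t}$ with $P_t>0$ for all $t$ and $P_t\to 0$: any other equilibrium with $P_t\to 0$ must be the non-monetary one $P_t\equiv 0$, because \eqref{eq:eqcond} shows that $P_t=0$ forces $P_{t+1}=0$ and, conversely, $P_{t+1}=0$ forces $P_t=0$.

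So suppose toward a contradiction that $\set{P_t}$ is a monetary equilibrium with $P_t\to 0$. Using the hypothesis $\liminf_{z\to 0}zv'(z)>0$, I would fix constants $m>0$ and $\bar z>0$ such that $zv'(z)\ge m$ for all $z\in(0,\bar z)$. Rewriting \eqref{eq:eqcond} with $b=0$ as
\begin{equation*}
    \beta\,P_{t+1}v'(P_{t+1})=u'(a-P_t)\,P_t,
\end{equation*}
the right-hand side tends to $0$ as $t\to\infty$: since $0<P_t<a$ eventually and $u'$ is continuous at $a$, we have $u'(a-P_t)\to u'(a)\in(0,\infty)$, hence $u'(a-P_t)P_t\to 0$. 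In particular the right-hand side is eventually $<\beta m$. On the other hand $P_{t+1}\to 0$, so for $t$ large $P_{t+1}\in(0,\bar z)$ and the left-hand side is $\ge\beta m$ --- a contradiction. Hence no monetary equilibrium converges to $0$, which is exactly the assertion that NMSS is locally determinate.

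For the sufficient condition, suppose $\gamma_v(z)\ge 1$ on some interval $(0,\bar z)$. Then
\begin{equation*}
    \frac{d}{dz}\log\!\big(zv'(z)\big)=\frac1z+\frac{v''(z)}{v'(z)}=\frac{1-\gamma_v(z)}{z}\le 0,
\end{equation*}
so $z\mapsto zv'(z)$ is non-increasing on $(0,\bar z)$ and therefore has a limit in $(0,+\infty]$ as $z\downarrow 0$, namely $\lim_{z\to 0^+}zv'(z)\ge z_0v'(z_0)>0$ for any fixed $z_0\in(0,\bar z)$. Thus $\liminf_{z\to 0}zv'(z)>0$ and the hypothesis of the first part is met.

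The only real obstacle is the conceptual one at the outset --- recognizing that the linearization/Hartman--Grobman route breaks down when $v'(0)=\infty$ and must be replaced by the direct argument above; once that is in hand, every remaining step is elementary.
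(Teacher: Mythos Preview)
Your proof is correct and follows essentially the same approach as the paper's: both rewrite the equilibrium condition with $b=0$ as $u'(a-P_t)P_t=\beta P_{t+1}v'(P_{t+1})$, observe that the right-hand side is bounded below by a positive constant while the left-hand side tends to $0$, and for the sufficient condition both show that $z\mapsto zv'(z)$ is non-increasing near $0$. The paper states the argument directly (concluding $P_t$ is bounded away from $0$) rather than by contradiction, and your added remarks on why the linearization route fails are a helpful supplement not present in the paper.
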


\begin{proof}
Since $v$ is continuously differentiable with $v'>0$, $z\mapsto zv'(z)$ is positive and continuous for $z>0$. Since $\liminf_{z\to 0}zv'(z)>0$, there exists a constant $c>0$ such that $zv'(z)\ge c$ for $z>0$. Take any monetary equilibrium. Setting $b=0$ in \eqref{eq:eqcond}, we obtain
\begin{equation*}
    u'(a-P_t)P_t=\beta v'(P_{t+1})P_{t+1}\ge 
    \beta c.
\end{equation*}
Since $u'(a-P)P\to 0$ as $P\to 0$, $P_t$ must be bounded away from zero. Therefore the equilibrium is asymptotically monetary and NMSS is locally determinate.

Finally, suppose $\gamma_v(z)\ge 1$ for $z\in (0,\delta]$. Then $-v''(z)/v'(z)\ge 1/z$, so
\begin{equation*}
    \log\frac{v'(z)}{v'(\delta)}=-\log \frac{v'(\delta)}{v'(z)}=\int_z^\delta -\frac{v''(z)}{v'(z)}\diff z\ge \int_z^\delta \frac{1}{z}\diff z=\log \frac{\delta}{z}.
\end{equation*}
Exponentiating both sides yields
\begin{equation*}
    \frac{v'(z)}{v'(\delta)}\ge \frac{\delta}{z}\iff zv'(z)\ge \delta v'(\delta)>0.\qedhere
\end{equation*}
\end{proof}

Propositions \ref{prop:b>0} and \ref{prop:b=0} show that the local determinacy of NMSS depends on whether $b>0$ or $b=0$. Although the case $b=0$ is only a single point and is non-generic, it is nevertheless economically relevant because it may be regarded as a situation where the old have no income and must save for retirement. In a more general model, \citet{Santos1990} shows the existence of monetary equilibria in OLG economies when agents have no endowments of some future good under some assumption on the curvature of the utility function.


\begin{thebibliography}{16}
	\providecommand{\natexlab}[1]{#1}
	\providecommand{\url}[1]{\texttt{#1}}
	\expandafter\ifx\csname urlstyle\endcsname\relax
	\providecommand{\doi}[1]{doi: #1}\else
	\providecommand{\doi}{doi: \begingroup \urlstyle{rm}\Url}\fi
	
	\bibitem[Blanchard and Fischer(1989)]{BlanchardFischer1989}
	Olivier Blanchard and Stanley Fischer.
	\newblock \emph{Lectures on Macroeconomics}.
	\newblock MIT Press, Cambridge, MA, 1989.
	
	\bibitem[Boldrin and Woodford(1990)]{BoldrinWoodford1990}
	Michele Boldrin and Michael Woodford.
	\newblock Equilibrium models displaying endogenous fluctuations and chaos.
	\newblock \emph{Journal of Monetary Economics}, 25\penalty0 (2):\penalty0
	189--222, March 1990.
	\newblock \doi{10.1016/0304-3932(90)90013-T}.
	
	\bibitem[Bose and Ray(1993)]{BoseRay1993}
	Amitava Bose and Debraj Ray.
	\newblock Monetary equilibrium in an overlapping generations model with
	productive capital.
	\newblock \emph{Economic Theory}, 3\penalty0 (4):\penalty0 697--716, December
	1993.
	\newblock ISSN 1432-0479.
	\newblock \doi{10.1007/BF01210266}.
	
	\bibitem[Chicone(2006)]{Chicone2006}
	Carmen Chicone.
	\newblock \emph{Ordinary Differential Equations with Applications}.
	\newblock Number~34 in Texts in Applied Mathematics. Springer, 2 edition, 2006.
	\newblock \doi{10.1007/0-387-35794-7}.
	
	\bibitem[Gale(1973)]{Gale1973}
	David Gale.
	\newblock Pure exchange equilibrium of dynamic economic models.
	\newblock \emph{Journal of Economic Theory}, 6\penalty0 (1):\penalty0 12--36,
	February 1973.
	\newblock \doi{10.1016/0022-0531(73)90041-0}.
	
	\bibitem[Galor and Ryder(1991)]{GalorRyder1991}
	Oded Galor and Harl~E. Ryder.
	\newblock Dynamic efficiency of steady-state equilibria in an
	overlapping-generations model with productive capital.
	\newblock \emph{Economics Letters}, 35\penalty0 (4):\penalty0 385--390, April
	1991.
	\newblock \doi{10.1016/0165-1765(91)90007-8}.
	
	\bibitem[Gechert et~al.(2022)Gechert, Havranek, Irsova, and
	Kolcunova]{GechertHavranekIrsovaKolcunova2022}
	Sebastian Gechert, Tomas Havranek, Zuzana Irsova, and Dominika Kolcunova.
	\newblock Measuring capital-labor substitution: The importance of method
	choices and publication bias.
	\newblock \emph{Review of Economic Dynamics}, 45:\penalty0 55--82, July 2022.
	\newblock \doi{10.1016/j.red.2021.05.003}.
	
	\bibitem[Grandmont(1985)]{Grandmont1985}
	Jean-Michel Grandmont.
	\newblock On endogenous competitive business cycles.
	\newblock \emph{Econometrica}, 53\penalty0 (5):\penalty0 995--1045, September
	1985.
	\newblock \doi{10.2307/1911010}.
	
	\bibitem[Hirano and Toda(2024)]{HiranoToda2024JME}
	Tomohiro Hirano and Alexis~Akira Toda.
	\newblock Bubble economics.
	\newblock \emph{Journal of Mathematical Economics}, 111:\penalty0 102944, 2024.
	\newblock \doi{10.1016/j.jmateco.2024.102944}.
	
	\bibitem[Oberfield and Raval(2021)]{OberfieldRaval2021}
	Ezra Oberfield and Devesh Raval.
	\newblock Micro data and macro technology.
	\newblock \emph{Econometrica}, 89\penalty0 (2):\penalty0 703--732, March 2021.
	\newblock \doi{10.3982/ECTA12807}.
	
	\bibitem[Samuelson(1958)]{Samuelson1958}
	Paul~A. Samuelson.
	\newblock An exact consumption-loan model of interest with or without the
	social contrivance of money.
	\newblock \emph{Journal of Political Economy}, 66\penalty0 (6):\penalty0
	467--482, December 1958.
	\newblock \doi{10.1086/258100}.
	
	\bibitem[Santos(1990)]{Santos1990}
	Manuel~S. Santos.
	\newblock Existence of equilibria for monetary economies.
	\newblock \emph{International Economic Review}, 31\penalty0 (4):\penalty0
	783--797, November 1990.
	\newblock \doi{10.2307/2527016}.
	
	\bibitem[Scheinkman(1980)]{Scheinkman1980}
	Jos\'e~A. Scheinkman.
	\newblock Discussion.
	\newblock In John~H. Kareken and Neil Wallace, editors, \emph{Models of
		Monetary Economies}, pages 91--96. Federal Reserve Bank of Minneapolis, 1980.
	\newblock URL
	\url{https://researchdatabase.minneapolisfed.org/collections/tx31qh93v}.
	
	\bibitem[Shi and Suen(2014)]{ShiSuen2014}
	Lisi Shi and Richard M.~H. Suen.
	\newblock Asset bubbles in an overlapping generations model with endogenous
	labor supply.
	\newblock \emph{Economics Letters}, 123\penalty0 (2):\penalty0 164--167, May
	2014.
	\newblock \doi{10.1016/j.econlet.2014.02.002}.
	
	\bibitem[Tirole(1985)]{Tirole1985}
	Jean Tirole.
	\newblock Asset bubbles and overlapping generations.
	\newblock \emph{Econometrica}, 53\penalty0 (6):\penalty0 1499--1528, November
	1985.
	\newblock \doi{10.2307/1913232}.
	
	\bibitem[Wallace(1980)]{Wallace1980}
	Neil Wallace.
	\newblock The overlapping generations model of fiat money.
	\newblock In John~H. Kareken and Neil Wallace, editors, \emph{Models of
		Monetary Economies}, pages 49--82. Federal Reserve Bank of Minneapolis, 1980.
	\newblock URL
	\url{https://researchdatabase.minneapolisfed.org/collections/tx31qh93v}.
	
\end{thebibliography}
\end{document}